\newtheorem{theorem}{Theorem}
\newtheorem{lemma}{Lemma}
\newtheorem{remark}{Remark}
\newtheorem{corollary}{Corollary}
\DeclareMathOperator{\E}{\mathbbmss{E}}
\DeclareMathOperator{\Prob}{\mathbbmss{P}}
\DeclareMathOperator{\Trn}{\mathsf{T}}
\begin{document}
\bstctlcite{IEEEexample:BSTcontrol}
%
\title{Generalized Degrees of Freedom of the Symmetric Cache-Aided MISO Broadcast Channel with Partial CSIT}
\author{Enrico~Piovano, Hamdi~Joudeh and Bruno~Clerckx
\thanks{The authors are with the Communications and Signal Processing group, Department of Electrical and Electronic Engineering, Imperial College London, London SW7 2AZ, U.K. (email: \{e.piovano15; hamdi.joudeh10; b.clerckx\}@imperial.ac.uk).}
\thanks{This work was partially supported by the U.K. Engineering and Physical Sciences Research Council (EPSRC) under grant EP/N015312/1.
This paper was presented in part at the 2018 IEEE International Symposium on
Information Theory \cite{Piovano2018}.}}
\maketitle
\begin{abstract}
We consider the cache-aided MISO broadcast channel (BC) in which a multi-antenna transmitter
serves $K$ single-antenna receivers, each equipped with a cache memory.
The transmitter has access to partial knowledge of the channel state information.
For a symmetric setting, in terms of channel strength levels, partial channel knowledge levels and cache sizes,
we characterize the generalized degrees of freedom (GDoF) up to a constant multiplicative factor.
The achievability scheme exploits the interplay between spatial multiplexing gains and coded-multicasting gain.
On the other hand, a cut-set-based argument in conjunction with a GDoF outer bound for a parallel MISO BC under channel uncertainty are used for the converse.
We further show that the characterized order-optimal GDoF is also attained in a decentralized setting, where no coordination is required for content placement in the caches.
\end{abstract}
\section{Introduction}
Traffic over wireless networks is predominantly becoming content-oriented, a transformation mainly driven by the advent of multimedia applications, especially video-on-demand services \cite{Bastug2014}.
For this type of traffic, there is often a large content library out of which  users request specific files.
The content library is typically generated well before transmission, creating the opportunity to pre-store (i.e. cache) parts of
the content at different nodes across the network during off-peak times, when the network resources are under utilized.
This cached information is then used during peak times, when users are actively requesting content and competing for wireless spectrum, to reduce the transmission load over the  network \cite{Shanmugam2013}.
Therefore, such cache-aided networks often operate in two phases: a \emph{placement phase} which takes place during off-peak times, and a \emph{delivery phase} which takes place during peak times \cite{Maddah-Ali2014}.

In single-user systems, the caching gain comes from making
part of the content locally available to the user.
Such \emph{local caching gain} scales with the cache memory size, and extends to networked systems with no interference, i.e.
where each user enjoys a dedicated and isolated communication link.
The picture, however, is very different when users share communication links.
This was taken up by Maddah-Ali and Niesen in \cite{Maddah-Ali2014}, where caching was investigated in the context of a broadcast network in which one transmitter (server) communicates with multiple users, equipped with cache memories, over a shared noiseless link.
In addition to the obvious local caching gains, Maddah-Ali and Niesen revealed a (hidden) \emph{global caching gain} which scales with the aggregate size of cache memories distributed across the network, despite the lack of cooperation amongst users during transmissions.
Such global caching gain is exploited through careful placement of content during the placement phase, creating
(coded) multicasting opportunities during the delivery phase, that would not naturally occur otherwise.
This in turn allows serving multiple distinct user demands using fewer transmissions.

Global caching gains were initially demonstrated assuming a centralized setting, were centrally coordinated placement takes place \cite{Maddah-Ali2014}.
While the placement phase takes place during off-peak hours before user demands are known to the transmitter, it was still assumed that it was carried out in a centrally coordinated manner in which
 the number and identity of active users during the delivery phase are known beforehand.
This is often difficult to satisfy in practical networks, particularly in wireless settings where users enjoy a high degree of mobility.
This called for developing a decentralized version of coded-caching, where placement is randomized and hence
independent of the identity and number of active users during the delivery phase \cite{Maddah-Ali2015a}. Surprisingly, it was shown in \cite{Maddah-Ali2015a} that decentralization comes at a low price, achieving an order-optimal performance comparable to the centralized scheme.

The coded-caching framework above has been further extended in many directions.
Such developments were recently surveyed in \cite{Maddah-Ali2016}, in which challenges and open problems are also discussed.
One of the main open problems identified in \cite{Maddah-Ali2016}
is the capacity characterization of cache-aided wireless networks.
\subsection{Cache-Aided Wireless Networks}
The capacity of wireless networks is one of the longest standing open problems in network information theory.
The intractability of the problem, in its generality, motivated the use of capacity approximations, e.g. the Degrees of Freedom (DoF) metric and the Generalized Degrees of Freedom (GDoF) metric.
The introduction of such metrics allowed significant progress in capacity studies.
Since incorporating caches adds an extra layer of complexity to the network, it is not surprising to see
that the utilization of the above approximations is inherited by works studying cache-aided wireless networks.
Examples of such studies in different scenarios are given in
\cite{Maddah-Ali2015,Naderializadeh2017,Hachem2018,Xu2017,Sengupta2017,Kakar2017,Ji2016,Yi2016,Zhang2015,Zhang2017,Lampiris2017,Piovano2017}.

Amongst the main insights derived from the above studies is that caching at the transmitters creates interference alignment and zero-forcing opportunities, enabled through partial and full transmitter cooperation.
For example, interference channels start resembling X channels and eventually turn into multi-antenna broadcast channels \cite{Maddah-Ali2015,Naderializadeh2017,Hachem2018,Xu2017}.
On the other hand, caching at receivers creates coded-multicasting opportunities,
which are particularly useful in scenarios where spatial degrees of freedom cannot sufficiently create parallel interference free links.
For example, coded-multicasting gains are pronounced in multi-antenna broadcast channels
with more receivers than transmitting antennas \cite{Naderializadeh2017,Shariatpanahi2017}
and/or where channel state information at the transmitter (CSIT) is imperfect \cite{Zhang2015,Zhang2017,Lampiris2017,Piovano2017}.
\subsection{The Cache-Aided MISO Broadcast Channel}
{In this paper, we focus on the cache-aided multiple-input-single-output broadcast channel (MISO BC),
in which a $K$-antenna transmitter serves $K$ single-antenna users, where each user is equipped with a cache memory.
Note that the $K$ transmit antennas in the considered setup are not necessarily physically co-located,
and may generally represent $K$ radio heads (or remote antennas) connected through a strong
fronthaul.}
When CSIT is available with high accuracy,
parallel non-interfering links can be created through zero-forcing.
In this case, interference is completely managed through spatial pre-processing, and the usefulness of caches is restricted to local caching gains.
However, this is not the case when only partial of imperfect CSIT is available as observed  in \cite{Zhang2015,Zhang2017,Lampiris2017}.

Studying the classical MISO BC (with no caches) reveals that spatial multiplexing gains (i.e. DoF) of this channel suffer losses under imperfect CSIT.
For example, the extreme case of finite precision CSIT causes a total collapse of the DoF to $1$, where all (DoF) benefits of multiple transmitting antennas are lost \cite{Davoodi2016}.
The availability of partial instantaneous CSIT can help salvage some of the lost gains, achieving DoF between $1$ and $K$ depending on the CSIT quality.
The complementary role of coded-caching in such scenarios was first observed in \cite{Zhang2015}.
In particular, while the primary role of CSIT is to facilitate interference management (e.g. through zero-forcing),
coded-caching reduces interference all together by creating multicasting opportunities.
Hence, it was shown in \cite{Zhang2015} that coded-caching can offset the loss due to partial CSIT, up to a certain CSIT quality given the cache size.

The DoF metric, however, can be very pessimistic, as best exemplified by the DoF collapse in \cite{Davoodi2016}.
This is mainly due the limitations of the DoF framework, assigning equal strengths to every link (with non-zero gain) in the wireless network.
In a way, the DoF metric fails to capture one of the wireless channel's most important features: propagation loss.
This limitation is countered by the GDoF framework, which largely inherits the tractability of the DoF framework
while capturing the diversity in channel strengths \cite{Etkin2008,Davoodi2017a,Davoodi2018}.
The cache-aided MISO BC was studied under the GDoF framework in \cite{Lampiris2017}, while limiting to completely absent CSIT
and considering only achievability, with no guarantees on optimality\footnote{The same can be said about \cite{Zhang2015}, where the DoF under partial CSIT can be equivalently interpreted as the GDoF under no CSIT (see Section \ref{subsubsec:GDoF_no_CSIT}). No converse is given in \cite{Zhang2015}, except for the trivial case where perfect CSIT is available.}.
In a different line of work, the cache-aided MISO BC under partial CSIT was considered while focusing on the massive MIMO regime \cite{Ngo2018}.
In particular, \cite{Ngo2018} studies the delivery rate scaling laws, as the number of transmitting antennas grows arbitrarily large, using off-the-shelf caching strategies.
While no guarantees on information-theoretic optimality are provided in the above work,
the emphasis on the interplay between spatial multiplexing gains and coded-multicasting gains is very interesting.
It turns out that this interplay, which was first noticed in \cite{Zhang2015} and then further investigated in \cite{Piovano2017,Shariatpanahi2017,Ngo2018}, plays a central role in achieving and interpreting the order-optimal GDoF of the cache-aided MISO BC under partial CSIT as we show through our results.
Next, we highlight the main contribution of this paper.
\subsection{Main Contributions and Organization}
We consider a $K$-user cache-aided MISO BC within the (symmetric) GDoF framework,
where the channel strength of cross-links is captured through the famous $\alpha \in [0,1]$ parameter \cite{Etkin2008,Davoodi2017a,Davoodi2018}.
In addition, we capture the entire range of (symmetric) partial CSIT levels through the quality parameter $\beta \in [0,\alpha]$,
where $\beta = 0$ and $\beta = \alpha$ correspond to essentially absent and perfect CSIT, respectively \cite{Davoodi2018}.
For this setting, the main contributions are twofold, as stated below:
\begin{enumerate}
\item We characterize the optimal GDoF up to a constant multiplicative factor, which is independent of all system parameters.
This order-optimal GDoF characterization is derived while allowing central coordination during the placement phase of the
achievability scheme.
\item We show that the order-optimal GDoF, characterized under centralized placement, is also attained in
decentralized settings where no coordination during the placement phase is allowed.
\end{enumerate}
It is worthwhile highlighting that the order optimal schemes
for the considered cache-aided MISO BC, for both the centralized and decentralized cases,
abide by the \emph{separation} principle \cite{Naderializadeh2017a}.
In particular, the placement and generation of coded-multicasting messages are independent
of the physical channel parameters (e.g. link strengths or topology), and follow the placement and message generation
of the original shared-link Maddah-Ali and Niesen schemes \cite{Maddah-Ali2014,Maddah-Ali2015a}.
On the other hand, the delivery of the coded-multicasting messages over the physical channel uses the principle of
rate-splitting with common and private signalling, commonly employed for the classical MISO BC with partial CSIT \cite{Yang2013,Joudeh2016,Davoodi2018},
and essentially operates the physical channel at some point of its multiple multicast GDoF region.

One of the technical challenges in characterizing the optimal GDoF for the above setting is the converse, i.e.
deriving an outer bound which is within a constant multiplicative factor from the achievable GDoF.
Under partial CSIT, the conventional cut-set-based argument in \cite{Maddah-Ali2014}
fails when employed on its own (see also \cite{Hachem2018,Xu2017,Sengupta2017} for variants of such argument).
Alternatively, we derive an outer bound by marrying the approach in \cite{Maddah-Ali2014} with
a robust GDoF outer bound for a parallel MISO BC under partial CSIT,
which in turn employs results from recent works by Davoodi and Jafar on classical
networks (with no caches) under  channel uncertainty
\cite{Davoodi2016,Davoodi2017a,Davoodi2018}.
Specifically, in this novel adaptation of the approach in \cite{Davoodi2016,Davoodi2017a,Davoodi2018} to cache-aided network,
caches at receivers are replaced with equivalent parallel side links, and then an upper bound on the GDoF of the resulting parallel sub-channels is derived.

Another technical challenge arises when dealing with the decentralized setting, particularly due to the
intractable form of the GDoF achieved under decentralized placement.
This intractability is circumvented by observing that the decentralized achievable GDoF
is bounded below by a centralized-like achievable GDoF, yet with a smaller coded-multicasting gain compared to the one achieved in a true centralized setting.
This key observation enables us to prove order-optimality in the decentralized setting.

In addition to the contributions highlighted above,
we derive several insights from the optimal GDoF characterization, which generalize former observations
obtained in special cases of the considered setting \cite{Zhang2015,Lampiris2017,Davoodi2016,Davoodi2018}.
Such insights, and how they relate to previous observations, can be found in Section \ref{sec:cen_results}.
As for the remainder of the paper, the organization is as follows.
Section \ref{sec:problem_setting} introduces the considered setting and problem.
Section \ref{sec:main_results} presents the two main results and related insights.
In Section \ref{sec:outer_bound}, we derive an outer bound which is employed in the following two sections to show order optimality.
In Section \ref{sec:centralized} and Section \ref{sec:decentralized}, we prove the two main results, the centralized setting result and the decentralized setting result respectively.
Section \ref{sec:conclusion} concludes the paper.
\newcounter{Theorem_Counter}
\newcounter{Proposition_Counter}
\newcounter{Lemma_Counter}
\newcounter{Remark_Counter}
\newcounter{Assumption_Counter}
\newcounter{Definition_Counter}
\section{Problem Setting} \label{sec:problem_setting}
Consider a MISO BC consisting of a $K$-antenna transmitter serving $K$ receivers (or users),
where users are equipped with a single-antenna each.
Users are indexed by the set $[K]\triangleq\{1,2,\dots,K\}$.
In a communication session, each user requests one file from a content library \mbox{$\mathcal{W} \triangleq \{W_{1},\ldots,W_{N}\}$} consisting of $N \geq K$ files, each of size $F$ bits.
We assume that the transmitter has access to the entire library (this applies to each radio head,
or remote antenna, in  physically distributed settings).
\begin{figure}[]
	\centering
	\includegraphics[width=0.6\textwidth]{./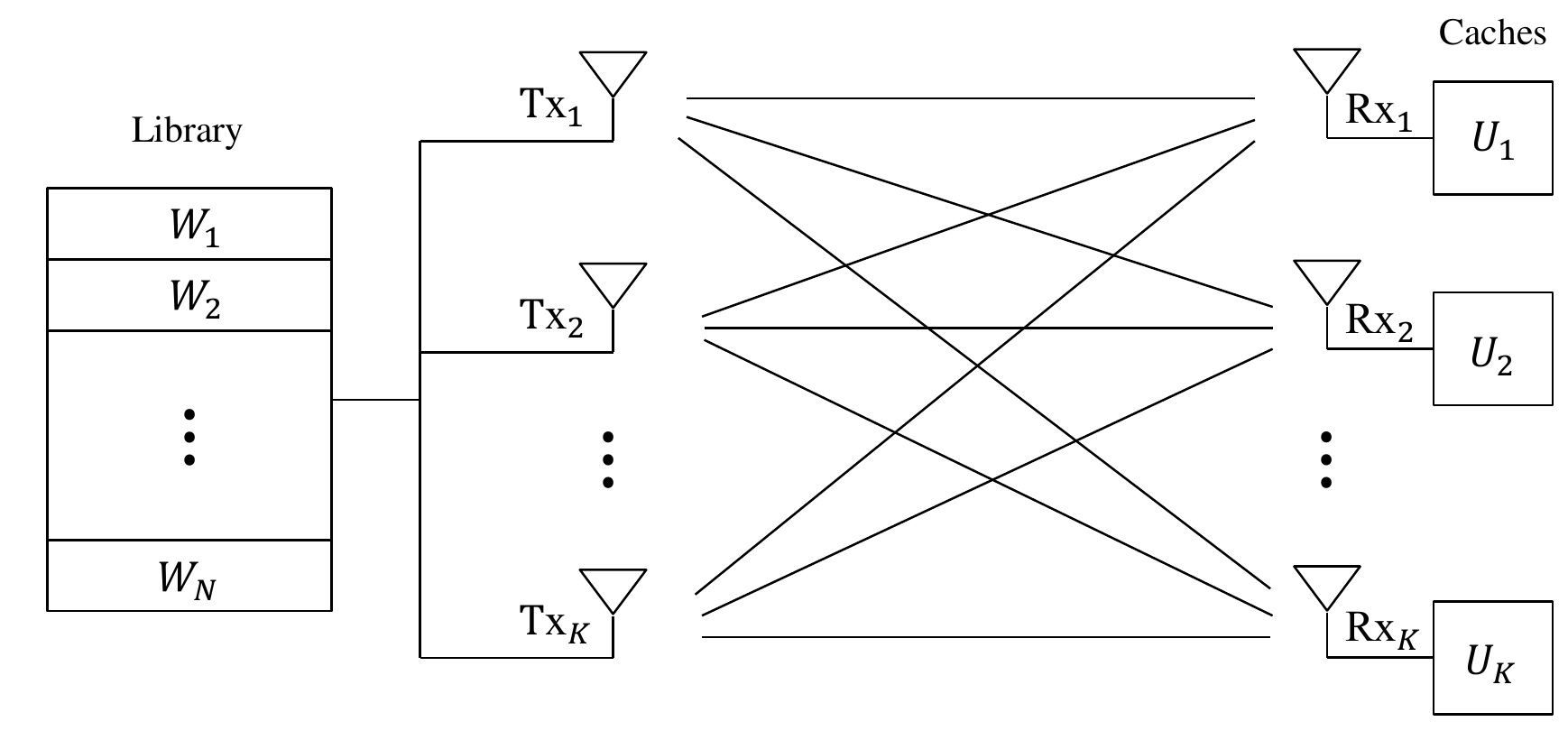}
	\caption{A wireless network in which a transmitter of $K$ antennas, $\text{Tx}_{1},\ldots,\text{Tx}_{K}$, serves
     $K$ single-antenna receivers, $\text{Rx}_{1},\ldots,\text{Rx}_{K}$. The transmitter has access to a library of $N$ files, while each receiver $\text{Rx}_{i}$ is equipped with a cache memory $U_i$.}
	\label{fig:fig_1}
\end{figure}

At the receiving end of the channel, each user $i$ is equipped with a cache memory $U_{i}$ of size $MF$ bits, where $M \in [0,N]$.
We define the \emph{normalized cache size} as
\begin{equation}
\mu \triangleq \frac{M}{N}
\end{equation}
which is interpreted as the fraction of the content library each user is able to store locally.
An illustration of the setup is given in Fig. \ref{fig:fig_1}.
It is readily seen that $\mu = 0$ reduces the setup to the classical MISO BC, while no communication needs to take place under
$\mu = 1$.
We refer to the $j$-th transmit antenna (or radio head) as the \emph{$j$-th transmitter} henceforth,
while \emph{transmitters} refers to the $K$ transmit antennas jointly.

The network  operates in two phases, \emph{a placement phase} and a \emph{delivery phase} \cite{Maddah-Ali2014}.
The placement phase takes place during the off-peak times before knowing the future demands of different users.
During this phase, the cache memories of the users are filled as
an arbitrary function of the $N$ files, where such function is denoted as $U_{i} = \phi_{i}(\mathcal{W})$.
The delivery phase takes place during peak times where each user requests one of the $N$ files. For example, user $i$ requests file $W_{d_{i}}$ for some $d_{i} \in [N]$, where $\mathbf{d} = (d_{1},\ldots,d_{K})$ is the tuple of all user demands.
Upon receiving the requests, each transmitter $j$ sends a codeword $X_{j}^{T} = X_{j}(1),\ldots,X_{j}(T)$ over $T\in \mathbb{N}$ uses of the \emph{physical channel}.
At the other end, each user $i$ receives the sequence $Y_{i}^{T} = Y_{i}(1),\ldots,Y_{i}(T)$, a noisy linear combination of the $K$ transmitted codewords.
The user then decodes for its requested file from $Y_{i}^{T}$ and the content of its own cache memory $U_{i}$.
This is described in more detail below.
\subsection{Physical Channel}
\label{subsec:physical channel}
The input-output relationship at the $t$-th use of the physical channel, $t \in [T]$, is modeled by
\begin{equation}
\label{eq:signal_model_1}
Y_{i}(t) = \sum_{j = 1}^{K} \sqrt{a_{ij}} G_{ij}(t)  X_{j}(t) + Z_{i}(t)
\end{equation}
where $Y_{i}(t) \in \mathbb{C}$ is the signal received by the $i$-th user, $X_{j}(t)\in \mathbb{C}$ is the $j$-th transmitter's normalized signal with power constraint $\E \left(|X_{j}(t)|^{2} \right) \leq 1$ and $Z_{i}(t) \sim \mathcal{N}_{\mathbb{C}}(0,1)$ is the normalized additive white Gaussian noise (AWGN), which is i.i.d. across all dimensions.
$a_{ij} \in \mathbb{R}_{+}$, $\forall j,i \in [K]$, captures the long-term constant gain of the \emph{link} between the $j$-th transmitter and the $i$-th receiver,
while $G_{ij}(t) \in \mathbb{C}$ is the corresponding time-varying  fading channel coefficient.
To avoid degenerate situations, we assume that the instantaneous value $|G_{ij}(t)|$ is bounded away from zero and infinity for all $i,j \in [K]$ and $t \in [T]$.
\subsubsection{GDoF Framework}
For any $i,j \in [K]$ and $i \neq j$, we refer to the link between transmitter $i$ and receiver $i$ as a \emph{direct-link},
while the link from transmitter $j$ to receiver $i$ is referred to as a \emph{cross-link}.
We consider a symmetric setup in which all direct-links (or cross-links) have similar long-term gains.
For GDoF purposes, we introduce the nominal SNR value $P \in \mathbb{R}_{+}$, simply referred to as the SNR henceforth.
Following the GDoF framework \cite{Etkin2008,Davoodi2017a},
channel gains are expressed in terms of the SNR as
\begin{equation}
\label{eq:SNR_INR}
a_{ii} = P \ \ \text{and} \ \ a_{ij} = P^{\alpha}, \ \forall i,j \in [K], \; i \neq j
\end{equation}
where the parameter $\alpha \geq 0$ quantifies the strength of cross-links.
The exponents of $P$ in \eqref{eq:SNR_INR}, i.e. $1$ and $\alpha$, are known as the channel strength parameters or levels.
The channel model in \eqref{eq:signal_model_1} is rewritten as
\begin{equation}
\label{eq:signal_model_2}
Y_{i}(t) = \sqrt{P} G_{ii}(t) X_{i}(t) +  \sum_{j = 1, j \neq i}^{K} \sqrt{P^{\alpha}} G_{ij}(t)  X_{j}(t) + Z_{i}(t)
\end{equation}
which is the model used throughout the paper.
The results in this paper are restricted to the regime $\alpha \in [0,1]$, i.e. scenarios in which the cross-link strength level is at most as strong as the direct-link strength level.
This is the most practically relevant regime, since each receiver associates with a
transmitter (i.e. radio head or remote antenna) from which it receives the strongest signal.
Moreover, as highlighted in \cite{Davoodi2018}, the regime $\alpha > 1$ poses new challenges and remains an open problem even for the classical MISO BC (with no caches) under partial CSIT.
\begin{remark}
As pointed out in \cite{Davoodi2017a}, the scaling of $P$ in the GDoF framework does not correspond to a
physical scaling of transmitting powers in a given channel (or network).
The correct interpretation is that each value of $P$ defines a new channel.
A class of channels parameterized by $\alpha$
belong together because the point-to-point capacity of any link (direct or cross) normalized by $\log(P)$ is approximately the same across all such channels belonging to the same class.
Hence, unlike the DoF framework, the GDoF framework preserves the diversity in link strengths as $P \rightarrow \infty$.
Moreover, DoF results are recovered from GDoF results by setting $\alpha = 1$, i.e, the special case in which all links are equally strong.
\end{remark}
\subsubsection{Partial CSIT}
\label{subsubsec:partial_CSIT}
Let $\mathcal{G} \triangleq \big\{ G_{ij}(t): i,j \in [K], \; t \in [T] \big\}$
be the set of  all channel coefficient variables.
Under partial CSIT, such channel coefficients may be represented as
\begin{equation}
G_{ij}(t) = \hat{G}_{ij}(t) + \sqrt{P^{-\beta}} \tilde{G}_{ij}(t)
\end{equation}
where $\hat{\mathcal{G}} \triangleq \big\{ \hat{G}_{ij}(t): i,j \in [K], \; t \in [T] \big\}$ are channel estimates, $\tilde{\mathcal{G}} \triangleq \big\{ \tilde{G}_{ij}(t): i,j \in [K], \; t \in [T] \big\}$ are estimation error terms and $\beta \in \mathbb{R}$ is a parameter capturing the CSIT quality level.
The channel knowledge available to the transmitters includes the coarse channel strength levels $\alpha$, the CSIT quality level $\beta$ and the estimates in $\hat{\mathcal{G}}$, but does not include the error terms in $\tilde{\mathcal{G}}$.

All variables in $\hat{\mathcal{G}}$ and $\tilde{\mathcal{G}}$ are subject to the bounded density assumption as explained in \cite{Davoodi2017a,Davoodi2018}.
The  difference between $\hat{\mathcal{G}}$ and $\tilde{\mathcal{G}}$, as pointed out earlier, is that the former is revealed to the transmitters while the latter is not.
Hence, given the estimates $\hat{\mathcal{G}}$, the variance of each channel coefficient in $\mathcal{G}$ behaves as $\sim P^{-\beta}$ and the peak of the probability density function behaves as $\sim \sqrt{P ^{\beta}}$.
Moreover, we assume throughout this work that $\beta \in [0,\alpha]$. In particular, $\beta = 0$ and $\beta = \alpha$ capture the two extremes where channel knowledge at the transmitters is absent and perfectly available, respectively \cite{Davoodi2018}.

Before we proceed, it is worth highlighting that channel state information at the receivers (CSIR) is assumed to be perfect.
Moreover, in a slight abuse of notation, we henceforth use $\hat{\mathcal{G}}$ to denote the entire channel knowledge available to the transmitters.
\subsection{Performance Measures}
Once transmitters are informed of the demands $\mathbf{d}$ in the delivery phase,
each transmitter $j$ generates a sequence of $T$ channel inputs $X_{j}^{T} = \psi_{j}^{(T)}(\mathcal{W},\mathbf{d},U_{1},\ldots,U_{K},\hat{\mathcal{G}})$, where
$\psi_{j}^{(T)}$ is an encoding function. Note that the availability of partial CSIT
is reflected in the argument $\hat{\mathcal{G}}$ of $\psi_{j}^{(T)}$.
Once the transmission is complete, each user $i$ maps its received signal, local cache content, user demands and perfect channel knowledge
to an estimate of the requested file $W_{d_{i}}$ denoted as $\hat{W}_{i} = \eta_{i}^{(T)}(Y_{i}^{T},U_{i},\mathbf{d},\mathcal{G})$, where $\eta_{i}$ is the decoding function.
The information theoretic limits of the system are studied by fixing $N,K,M,P$, and $\hat{\mathcal{G}}$, referred to as system parameters, while allowing $F$ and $T$ to grow arbitrarily large.

For fixed system parameters, a code which takes files of size $F$ bits and transmits codewords of block-length $T$ channel uses is defined as $\mathcal{C}^{(T)} \triangleq \big\{ \phi_{i},\psi_{i}^{(T)},\eta_{i}^{(T)}: i \in[K] \big\}$. It is evident that a code is characterized by its corresponding caching, encoding and decoding functions defined earlier.
The performance of a code is governed by its worst-case probability of error defined as
\begin{equation}
P_{e}^{(T)} \triangleq \max_{\mathcal{G}\mid\hat{\mathcal{G}}}  \max_{\mathbf{d} \in [N]^{K}} \max_{i \in [K]} \; \Prob \big( \hat{W}_{i} \neq W_{d_{i}} \big)
\end{equation}
which is taken over all possible users, for all possible demands, under all possible realizations of the channel coefficients given the available CSIT.
The (sum) rate of such code is defined as
\begin{equation}
\label{eq:rate_definition}
R \triangleq \frac{KF}{T}.
\end{equation}
For given system parameters, we say that the rate $R$ is achievable if there exists a coding scheme, consisting of a sequence of
codes $\left\{\mathcal{C}^{(T)} : T \in \mathbb{N} \right\}$ of  rate $R$ each, with a vanishing probability of error as the block-length grows arbitrarily large, i.e. $P_{e}^{(T)} \rightarrow 0$ as $T \rightarrow \infty$. Note that a strictly positive rate $R > 0$ requires $F\rightarrow \infty$
as $T\rightarrow \infty$.
The (sum) capacity $C$ is defined as the supremum of all achievable rates taken over all feasible coding schemes.
\subsubsection{GDoF}
By highlighting the dependency on the SNR $P$, it can be seen that each $P$ defines a new  channel (or network) with capacity $C(P)$.
The optimal (sum) GDoF is hence defined as
\begin{equation}
\mathsf{GDoF} \triangleq \lim_{P \rightarrow \infty}\frac{C(P)}{\log (P)}.
\end{equation}
Being an asymptotic (high-SNR) measure, it is well understood that the GDoF does not depend on $P$.
On the other hand, while fixing the number of users $K$, we often write $\mathsf{GDoF}(\mu,\alpha,\beta)$ to highlight the dependency on the system parameters $\mu$, $\alpha$ and $\beta$.
In particular, it turns out that our GDoF characterization is expressed in terms of the normalized cache size $\mu=M/N$ instead of the exact $N$ and $M$, and the cross-link strength level $\alpha$ and partial CSIT level $\beta$ instead of the entire CSIT $\hat{\mathcal{G}}$.
These observations are consistent with existing DoF results for cache-aided networks on one hand \cite{Zhang2017,Naderializadeh2017,Hachem2018}, and GDoF studies in classical networks under finite precision and partial CSIT on the other hand \cite{Davoodi2017a,Davoodi2018}.
\subsubsection{Generalized Normalized Delivery Time}
Instead of working directly with the $\mathsf{GDoF}$, it is easier to derive the results in terms of a function of the reciprocal\footnote{This has been observed when dealing with the DoF in many works including \cite{Maddah-Ali2015,Zhang2017,Hachem2018,Xu2017}.} $1/\mathsf{GDoF}$.
Hence, we introduce the \emph{generalized normalized delivery time} (GNDT), where the optimal GNDT is defined as
\begin{equation} \label{eq:link_deliverytime_gdof}
 \mathsf{GNDT}(\mu,\alpha,\beta)  \triangleq \frac{K}{ \mathsf{GDoF}(\mu,\alpha,\beta)}.
\end{equation}
The GNDT (or the \emph{delivery time} as we refer to it throughout the paper) is measured in \emph{time-slot}. One time-slot is the optimal amount of time required to communicate a single file to a single user over a direct-link (with strength level $1$) under no caching and no interference as $P \rightarrow \infty$.
In particular, since a single user direct-link with no interference and no caching has a capacity of $\log (P) + o\big(\log (P) \big)$, i.e. $\mathsf{GDoF} = 1$, it is readily seen that $\mathsf{GNDT} = 1$ time-slot for such setting.
For any given $\mu$, $\alpha$ and $\beta$, we say that the delivery time  $\mathsf{GNDT}'(\mu,\alpha,\beta)$ is achievable if
$\mathsf{GNDT}'(\mu,\alpha,\beta) \geq \mathsf{GNDT}(\mu,\alpha,\beta)$.

The GNDT generalizes the \emph{normalized delivery time} (NDT) metric in \cite{Sengupta2017} to suit the GDoF framework.
Hence, it is not surprising to observe that the GNDT-GDoF relationship resembles (and generalizes) the NDT-DoF relationship.
Moreover, it is readily seen from \eqref{eq:link_deliverytime_gdof} that the GDoF can be interpreted as the capacity in files per time-slot.
Before we proceed, we remark that in this paper, as in \cite{Maddah-Ali2014,Maddah-Ali2015,Naderializadeh2017,Zhang2017,Hachem2018,Lampiris2017,Maddah-Ali2015a,Sengupta2017,Xu2017},
we adopt a worst-case definition of performance measures with respect to user requests.
As a result, it is always assumed that each user requests a different file.
\subsection{Centralized Placement vs. Decentralized Placement}
Although the placement phase does not depend on the actual user demands $\mathbf{d}$ in the delivery phase,
placement strategies may still depend on the identity and number of active users during the delivery phase.
Such coordination in the placement phase is known as centralized placement.
Since the identity, or even the number, of active users may not be known several hours before the delivery phase takes place,
it is also important to consider strategies in which placement is not allowed to depend on such information.
This lack of coordination is known as decentralized placement \cite{Maddah-Ali2015a}.
Decentralization during the placement phase can be realized by allowing
randomized placement schemes. For instance, each user $i$ independently
draws a caching function $\phi_{i}(\mathcal{W};D)$ from
an ensemble of randomized caching functions parameterized by an arbitrary random variable
$D$, independent of $i$ and $K$.
\section{Main Results and Insights} \label{sec:main_results}
The main results of this paper are: 1) the GDoF characterization of the symmetric cache-aided MISO BC under partial CSIT, described in Section \ref{sec:problem_setting}, to within a constant multiplicative gap, and 2) showing that such GDoF characterization is robust to decentralization.
We start by presenting the first result and deriving useful insights assuming a centralized setting, then we extend to the decentralized setting.
\subsection{Centralized placement}
\label{sec:cen_results}
In order to state the GDoF result, we define the centralized GNDT function $\mathsf{GNDT}_{\mathrm{C}}(\mu,\alpha,\beta)$, where
\begin{equation} \label{eq:T_cen}
\mathsf{GNDT}_{\mathrm{C}}(\mu,\alpha,\beta) \triangleq \frac{K(1-\mu)}{K(1-(\alpha-\beta))+(1+K\mu)(\alpha-\beta)}
\end{equation}
for any $\alpha \in [0,1]$, $\beta \in [0,\alpha]$ and $\mu \in \{0,\frac{1}{K},\frac{2}{K}, \dots, \frac{K-1}{K},1 \}$, and the lower convex envelope of these points for all other $\mu \in [0,1]$.
\begin{theorem} \label{th_cen}
For the symmetric cache-aided MISO BC under partial CSIT described in Section \ref{sec:problem_setting},
under centralized placement we achieve the GDoF given by
\begin{equation} \label{eq:GDoF_cen}
\mathsf{GDoF}_{\mathrm{C}}(\mu,\alpha,\beta)=\frac{K}{\mathsf{GNDT}_{\mathrm{C}}(\mu,\alpha,\beta)}.
\end{equation}
Moreover, the achievable GDoF in \eqref{eq:GDoF_cen} satisfies
\begin{equation} \label{eq:converse_cen}
\mathsf{GDoF}_{\mathrm{C}}(\mu,\alpha,\beta) \leq \mathsf{GDoF}(\mu,\alpha,\beta) \leq 12 \cdot\mathsf{GDoF}_{\mathrm{C}}(\mu,\alpha,\beta).
\end{equation}
\end{theorem}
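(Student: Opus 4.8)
**The plan is to establish the two inequalities in \eqref{eq:converse_cen} separately: an achievability result proving the lower bound $\mathsf{GDoF}_{\mathrm{C}} \leq \mathsf{GDoF}$, and a converse proving the upper bound $\mathsf{GDoF} \leq 12 \cdot \mathsf{GDoF}_{\mathrm{C}}$.** Working in the GNDT domain (via \eqref{eq:link_deliverytime_gdof}) is more convenient, since the reciprocal relationship converts the multiplicative gap on GDoF into a multiplicative gap on delivery time: the claim becomes $\tfrac{1}{12}\mathsf{GNDT}_{\mathrm{C}} \leq \mathsf{GNDT} \leq \mathsf{GNDT}_{\mathrm{C}}$, i.e.\ the scheme's delivery time $\mathsf{GNDT}_{\mathrm{C}}$ is achievable and is no worse than $12$ times the optimal.

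For \textbf{achievability} (the right inequality on GNDT), I would first verify the integer-cache points $\mu \in \{0,\tfrac{1}{K},\dots,1\}$ and then invoke convexity for intermediate $\mu$ by memory-sharing. At each such point I would follow the separation principle advertised in the introduction: use the Maddah--Ali--Niesen placement and XOR-based coded-multicasting message generation, which splits each file into $\binom{K}{K\mu}$ subfiles so that every subset of $1+K\mu$ users shares a coded-multicast message. The delivery of these messages over the physical channel is the GDoF-specific part: I would employ rate-splitting with common and private signalling, operating the MISO BC at a suitable point of its multiple-multicast GDoF region under partial CSIT level $\beta$ and cross-link strength $\alpha$. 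The target is to show the time to deliver all messages equals the numerator/denominator combination in \eqref{eq:T_cen}: the term $K(1-(\alpha-\beta))$ reflects the private-signalling portion carried at the residual strength above the uncertainty floor, while $(1+K\mu)(\alpha-\beta)$ reflects the multicast portion served to groups of $1+K\mu$ users, with the factor $(1-\mu)$ in the numerator capturing the reduction in content to be delivered after local caching.

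For the \textbf{converse} (the right inequality on GDoF, equivalently the left one on GNDT), the conventional cut-set bound of \cite{Maddah-Ali2014} fails under partial CSIT, so I would follow the hybrid strategy the authors flag: combine the cut-set/genie argument of \cite{Maddah-Ali2014} with a robust GDoF outer bound for a \emph{parallel} MISO BC under channel uncertainty. Concretely, I would replace the receiver caches by equivalent parallel side-links, reducing the cache-aided problem to bounding the GDoF of resulting parallel sub-channels, then apply the aligned-image-sets / bounded-density machinery of Davoodi--Jafar \cite{Davoodi2016,Davoodi2017a,Davoodi2018} to each sub-channel. Summing or combining these sub-bounds should yield an outer bound of the same functional form as $\mathsf{GNDT}_{\mathrm{C}}$ up to constants, from which the constant $12$ emerges by comparing coefficients across all regimes of $(\mu,\alpha,\beta)$.

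\textbf{The hard part will be the converse}, specifically the adaptation of the parallel-channel GDoF outer bound to the cache-aided setting and tracking the multiplicative slack so it never exceeds $12$ uniformly in $(\mu,\alpha,\beta)$ and $K$. The delicate point is that the cache-to-side-link substitution must preserve an information-theoretically valid bound while remaining loose enough to admit the Davoodi--Jafar estimates, yet tight enough that the achievable and outer expressions differ by only a constant factor rather than a gap growing with the system parameters. I expect the achievability to be comparatively mechanical once the rate-splitting delivery rate is computed at the corner points, whereas pinning down the constant gap will require carefully matching the piecewise-linear (convex-envelope) structure of $\mathsf{GNDT}_{\mathrm{C}}$ against the outer bound in each parameter regime.
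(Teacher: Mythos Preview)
Your proposal is correct and follows essentially the same two-part structure as the paper: Maddah--Ali--Niesen placement combined with rate-split (common/private) delivery over the physical channel for achievability, and a cut-set argument in which caches are replaced by parallel side-links together with a Davoodi--Jafar-style GDoF bound on the resulting parallel MISO BC for the converse. The one concrete mechanism you leave vague---how the constant $12$ is secured uniformly in $(\mu,\alpha,\beta,K)$---is handled in the paper by a short monotonicity lemma showing that the ratio of the achievable delivery time to the cut-set lower bound is non-decreasing in $\delta = \alpha - \beta$, which reduces the entire constant-factor comparison to the case $\delta = 1$ and thereby allows the original shared-link Maddah--Ali--Niesen gap analysis to be reused verbatim.
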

The proof of Theorem \ref{th_cen} is presented in Section \ref{sec:centralized}.
As in \cite{Maddah-Ali2015,Hachem2018}, the somewhat loose multiplicative gap of $12$ in Theorem \ref{th_cen}
is due to the analytical bounding techniques used in the converse.
Numerical simulations suggest that such factor is no more than $3.5$ for $K \leq 100$ and $N \leq 500$.

To gain some insights into the GDoF characterized in Theorem \ref{th_cen}, we restrict the following discussion to $\mu \in \{0,\frac{1}{K},\frac{2}{K}, \dots, \frac{K-1}{K} \}$, for which the achievable GDoF in \eqref{eq:GDoF_cen} is expressed as
\begin{align} \label{eq:GDoF_cen_2}
\mathsf{GDoF}_{\mathrm{C}}(\mu,\alpha,\beta) & =(1-(\alpha-\beta))\frac{K}{1-\mu}+(\alpha-\beta)\frac{1+K\mu}{1-\mu}.
\end{align}
It is easily seen that $\mathsf{GDoF}_{\mathrm{C}}(\mu,\alpha,\beta)$ in \eqref{eq:GDoF_cen_2} reduces to its classical counterpart in \cite{Davoodi2018} under $\mu = 0$, i.e. where no caches are available.
In this case, the multiplicative factor of $12$ can be reduced to $1$.
However, more significantly, the form taken by the GDoF in \eqref{eq:GDoF_cen_2}, for any $\mu$ (in the set above), is analogous
to the form of the classical GDoF in \cite{Davoodi2018}.
This is explained in more details next,
where we use the terminology of signal power levels measured in terms of the exponent of $P$ \cite{Avestimehr2015}.
We start by looking at specialized cases from which we build our way towards the general case.
\subsubsection{DoF Under Partial CSIT}
Recall that DoF characterization under partial CSIT is obtained by setting $\alpha = 1$.
Defining $\mathsf{DoF}_{\mathrm{C}}(\mu,\beta) \triangleq \mathsf{GDoF}_{\mathrm{C}}(\mu,1,\beta)$ and applying such specialization to \eqref{eq:GDoF_cen_2}, we obtain
\begin{align}
\label{eq:DoF_cen_2}
\mathsf{DoF}_{\mathrm{C}}(\mu,\beta) & =\beta\frac{K}{1-\mu}+(1-\beta)\frac{1+K\mu}{1-\mu}.
\end{align}
Under perfect CSIT ($\beta = 1$), zero-forcing over the physical channel enables a spatial multiplexing gain of $K$.
By incorporating caches into the picture, we obtain a further local caching gain of $\frac{1}{1-\mu}$,
which is the only relevant caching gain here as zero-forcing creates parallel (non-interfering) single-user links.
Under the other extreme, i.e. finite precision CSIT ($\beta = 0$),
all spatial multiplexing gains in the physical channel are lost and
the DoF collapses to the one obtained in the original setting with a shared
link \cite{Maddah-Ali2014}.
In this case, the network relies on the local caching gain of $\frac{1}{1-\mu}$ and the global caching gain of $1+K\mu$, where the latter is enabled by creating coded-multicasting opportunities.

It is readily seen that finite precision CSIT is as (un)useful as no CSIT from a DoF perspective\footnote{It is implicitly understood that such statements hold in an order-optimal sense. This applies to all similar observations herein.}. This is reminiscent of the DoF collapse
in the classical MISO BC \cite{Davoodi2016}.
Moreover, it is worth noting that since the DoF of the cache-aided MISO BC is an upper bound for the DoF of cache-aided interference networks,
this collapse under finite precision CSIT also holds for the networks in \cite{Naderializadeh2017,Hachem2018,Xu2017}.

For partial CSIT ($0 < \beta < 1$), the DoF takes the form
$\beta\mathsf{DoF}_{\mathrm{C}}(\mu,1) + (1-\beta) \mathsf{DoF}_{\mathrm{C}}(\mu,0)$,
laying on the line connecting the two extremes.
In this case, partial CSIT of level $\beta$ allows (power-controlled) zero-forcing transmission in the bottom $\beta$ signal power levels without leaking any interference above the noise floor at undesired users.
This utilization of only a fraction of power levels yields the factor $\beta$ in the DoF.
The remaining signal power levels are used for a shared-link-type transmission requiring no CSIT.
In particular, this transmission sees interference from the zero-forcing layer, hence is left with the top $(1-\beta)$ power levels as reflected
in the DoF.
Since all users can decode (and remove) all codewords in the shared link layer without influencing its achievable DoF, the zero-forcing layer remains unaffected.
To facilitate such partitioned transmission, messages (or files) are split into private and common parts delivered through the zero-forcing and shared link layers, respectively.

The scheme described above expands upon, and inherits the main features of, the rate-splitting scheme\footnote{Also known as signal space partitioning \cite{Yuan2016}.} used for the
classical MISO BC with partial CSIT (alongside other networks) \cite{Yang2013,Joudeh2016,Hao2017,Yuan2016,Davoodi2018,Davoodi2017}.
Hence, it is not surprising to see that the cache-aided DoF takes the same
weighted-sum form of the classical DoF in\cite{Joudeh2016}, recovered from the above by setting $\mu = 0$.
\subsubsection{GDoF Under Finite Precision CSIT}
\label{subsubsec:GDoF_no_CSIT}
This is recovered from \eqref{eq:GDoF_cen_2} by setting $\beta = 0$ and corresponds to the achievable GDoF in \cite{Lampiris2017}.
It is easily checked that the GDoF in this case takes the form of the DoF in \eqref{eq:GDoF_cen_2}, after replacing $\beta$ with $1 - \alpha$.
This is inline with the observation that DoF results under partial CSIT translate to
GDoF results under finite precision CSIT \cite{Yuan2016}.
This also highlights that unlike the DoF metric, the GDoF metric captures spatial multiplexing gains under finite precision (or even absent) CSIT.
Such multiplexing gains, however, are achieved by exploiting the signal power levels only.
\subsubsection{The General Case}
For arbitrary levels of $\beta$ and $\alpha$, the insights derived in \cite{Davoodi2018} for the GDoF of the classical MISO BC
extend to the cache-aided counterpart.
In particular, the cross-link strength level $\alpha$ and the CSIT quality level $\beta$ equally counter each other and hence only their difference $(\alpha-\beta)$ matters.
The bottom $1 - (\alpha-\beta)$ power levels are reserved for parallel-link-type transmission through zero-forcing and power control, while the shared-link-type transmission rises above, essentially occupying the top $(\alpha-\beta)$ power levels.
Therefore, it is readily seen that as $(\alpha-\beta)$ increases, the network starts relying more on the global caching gain and less on spatial multiplexing gains as reflected in \eqref{eq:GDoF_cen_2}.
\subsection{Decentralized placement}
In this part we consider the decentralized setting where centrally coordinated placement is not allowed during the placement phase.
Before we state the following result, we define the decentralized GNDT function $\mathsf{GNDT}_{\mathrm{D}}(\mu,\alpha,\beta)$, where
\begin{equation} \label{eq:T_decen}
\mathsf{GNDT}_{\mathrm{D}}(\mu,\alpha,\beta) \triangleq
{K  \sum_{m=0}^{K-1}{\frac{\binom{K-1}{m} \mu^m\left(1-\mu\right)^{K-m}}{K(1-(\alpha-\beta))+(1+m)(\alpha-\beta)}}}
\end{equation}
for any $\alpha \in [0,1]$, $\beta \in[ 0,\alpha]$ and $\mu \in [0,1]$.
\begin{theorem} \label{th_decen}
For the symmetric cache-aided MISO BC under partial CSIT described in Section \ref{sec:problem_setting},
under decentralized placement we achieve the GDoF given by
	\begin{equation} \label{eq:GDoF_decen}
	\mathsf{GDoF}_{\mathrm{D}}(\mu,\alpha,\beta)= \frac{K}{\mathsf{GNDT}_{\mathrm{D}}(\mu,\alpha,\beta)}.
	\end{equation}
Moreover, the achievable GDoF in \eqref{eq:GDoF_decen} satisfies
	\begin{equation} \label{eq:converse_decen}
	\mathsf{GDoF}_{\mathrm{D}}(\mu,\alpha,\beta) \leq \mathsf{GDoF}(\mu,\alpha,\beta) \leq 12 \cdot\mathsf{GDoF}_{\mathrm{D}}(\mu,\alpha,\beta).
	\end{equation}
	
\end{theorem}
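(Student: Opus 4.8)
The plan is to establish the two inequalities in \eqref{eq:converse_decen} separately, reusing the centralized result of Theorem \ref{th_cen} and the outer bound of Section \ref{sec:outer_bound} as much as possible. The left inequality $\mathsf{GDoF}_{\mathrm{D}} \le \mathsf{GDoF}$ is the achievability claim, while the right inequality $\mathsf{GDoF} \le 12\cdot\mathsf{GDoF}_{\mathrm{D}}$ is the order-optimality (converse) claim; I expect the latter to be the main obstacle, since $\mathsf{GNDT}_{\mathrm{D}}$ in \eqref{eq:T_decen} has no convenient closed form.

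For achievability I would adopt the decentralized randomized placement of \cite{Maddah-Ali2015a}: each user independently stores a uniformly random fraction $\mu$ of the bits of every file, so that after placement each file $W_n$ splits into subfiles $W_{n,\mathcal{S}}$ cached exactly by the users in $\mathcal{S}\subseteq[K]$, with $|W_{n,\mathcal{S}}|/F \to \mu^{|\mathcal{S}|}(1-\mu)^{K-|\mathcal{S}|}$ almost surely as $F\to\infty$. The key observation is that the delivery decomposes into $K$ independent layers indexed by $m=0,\dots,K-1$: layer $m$ gathers all subfiles $W_{d_i,\mathcal{S}}$ with $i\notin\mathcal{S}$ and $|\mathcal{S}|=m$, which are XOR-combined into coded-multicast messages each intended for a group of $m+1$ users, exactly as in the centralized scheme with parameter $t=m$. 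The physical-layer delivery of each layer is therefore nothing but the centralized delivery of Section \ref{sec:centralized} specialized to multicast group size $m+1$, achieving sum-GDoF $K(1-(\alpha-\beta))+(1+m)(\alpha-\beta)$ over the physical channel (the same quantity underlying \eqref{eq:T_cen}). Since the fraction of each requested file sitting in layer $m$ equals $\lambda_m \triangleq \binom{K-1}{m}\mu^m(1-\mu)^{K-m}$, the time to clear layer $m$ across all $K$ users is $K\lambda_m\big/\big(K(1-(\alpha-\beta))+(1+m)(\alpha-\beta)\big)$; summing over $m$ reproduces $\mathsf{GNDT}_{\mathrm{D}}(\mu,\alpha,\beta)$ exactly, establishing $\mathsf{GDoF}_{\mathrm{D}}\le\mathsf{GDoF}$.

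For the converse I would reuse the outer bound of Section \ref{sec:outer_bound}, which bounds the true optimum $\mathsf{GDoF}(\mu,\alpha,\beta)$ regardless of whether placement is centralized or decentralized; what remains is to show this outer bound lies within a factor $12$ of the intractable sum $\mathsf{GDoF}_{\mathrm{D}}$. Following the idea flagged in the introduction, I would first lower-bound $\mathsf{GDoF}_{\mathrm{D}}$ by a tractable \emph{centralized-like} quantity with a reduced coded-multicasting gain. Writing $\mathsf{GNDT}_{\mathrm{D}}=K(1-\mu)\,\E\!\big[f(m)\big]$ with $m\sim\mathrm{Binomial}(K-1,\mu)$ and $f(m)=\big(K(1-(\alpha-\beta))+(1+m)(\alpha-\beta)\big)^{-1}$, I would upper-bound $\E[f(m)]$ by exploiting that $f$ is positive, decreasing and bounded above by $f(0)$, together with a concentration (e.g.\ Chebyshev) estimate showing that the binomial mass well below its mean $(K-1)\mu$ is negligible. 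This yields $\mathsf{GNDT}_{\mathrm{D}}\le c\cdot\mathsf{GNDT}_{\mathrm{C}}(\mu,\alpha,\beta)$ for an absolute constant $c$, equivalently a centralized achievable GDoF whose effective multicast gain is shrunk from $1+K\mu$ to order $1+(K-1)\mu$, which by construction is a valid lower bound on $\mathsf{GDoF}_{\mathrm{D}}$.

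The final step combines this with the comparison already performed for Theorem \ref{th_cen}: rather than passing loosely through $\mathsf{GDoF}_{\mathrm{C}}$ (which would inflate the factor to $12c$), I would re-run the outer-bound-versus-achievable comparison of Section \ref{sec:outer_bound} directly against the reduced-gain centralized-like expression, tracking the constants. Because the numerical factor $12$ in Theorem \ref{th_cen} is deliberately loose, the genuine gap between the outer bound and the centralized achievable leaves enough slack to absorb the decentralization loss $c$, so that the combined factor remains at most $12$ and hence $\mathsf{GDoF}\le 12\cdot\mathsf{GDoF}_{\mathrm{D}}$. The main obstacle is exactly this bookkeeping: proving $\mathsf{GNDT}_{\mathrm{D}}\le c\cdot\mathsf{GNDT}_{\mathrm{C}}$ with a small enough $c$, uniformly in $K$, $\mu$, $\alpha$ and $\beta$, and verifying that the resulting product of constants never exceeds $12$.
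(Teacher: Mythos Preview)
Your achievability argument matches the paper's exactly: decentralized random placement as in \cite{Maddah-Ali2015a}, delivery in $K$ sub-phases indexed by $m$, each sub-phase using the centralized physical-layer delivery with multicast order $1+m$, summing the per-layer times to recover \eqref{eq:T_decen}.

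For the converse you have the right overall architecture --- upper-bound $\mathsf{GNDT}_{\mathrm{D}}$ by a centralized-like expression and then re-run the outer-bound comparison directly against that expression --- but the key technical step differs from the paper and, as you wrote it, has a gap. The paper does not use concentration. Instead it defines the parameter $u$ implicitly by $\sum_m b_m/(1+m) = 1/(1+u)$ (equivalently, so that the bound is \emph{exact} at $\delta=1$, where $\mathsf{GNDT}_{\mathrm{D}}(\mu,1)$ is the Maddah--Ali--Niesen decentralized delivery time), and then proves the single clean inequality
\[
\sum_{m=0}^{K-1}\frac{b_m}{K(1-\delta)+(1+m)\delta}\ \le\ \frac{1}{K(1-\delta)+(1+u)\delta}
\]
by applying Jensen's inequality to the concave function $v\mapsto v/(1+v)$ at the points $1/c_m'$ (Lemma~\ref{lemma_ineq_decen}). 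Because the upper bound $\mathsf{GNDT}_{\mathrm{D}}^{\mathrm{ub}}(\mu,\delta)$ coincides with $\mathsf{GNDT}_{\mathrm{D}}(\mu,1)$ at $\delta=1$, the re-run of the comparison inherits the already-established order-optimality constants of \cite{Maddah-Ali2015a} at $\delta=1$, and Lemma~\ref{lemma_f} (monotonicity in $\delta$, using $u\le K\mu$) propagates them to all $\delta$, landing exactly on $12$ with no extra factor.

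Your Chebyshev route, by contrast, is not uniform: the statement ``the binomial mass well below its mean $(K-1)\mu$ is negligible'' fails when $K\mu$ is of order $1$ or smaller, which is precisely the regime where the multicast gain matters. In that regime Chebyshev gives nothing useful, and you would be forced into a separate small-$K\mu$ case analysis that you have not sketched. Even if patched, this would produce $\mathsf{GNDT}_{\mathrm{D}}\le c\cdot\mathsf{GNDT}_{\mathrm{C}}$ for some $c>1$, and your final paragraph correctly identifies that absorbing this $c$ back into the factor $12$ is the crux --- but you give no mechanism for doing so. The paper sidesteps the whole issue: by anchoring $u$ at $\delta=1$ via Jensen, there is no multiplicative loss to absorb.
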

The proof of Theorem \ref{th_decen} is presented in Section \ref{sec:decentralized}.
The most significant consequence of Theorem \ref{th_decen} is that centralized placement leads to at most a constant-factor improvement of the GDoF over decentralized placement.
Through a straightforward inspection, this constant-factor improvement is bounded above by
$\mathsf{GDoF}_{\mathrm{C}}(\mu,\alpha,\beta) \leq 12 \cdot \mathsf{GDoF}_{\mathrm{D}}(\mu,\alpha,\beta)$, obtained from \eqref{eq:converse_cen} and \eqref{eq:converse_decen}.
In Section \ref{subsec:gap_cen_decen},
this multiplicative gap between the centralized GDoF and decentralized GDoF is tightened
to $1.5$.

In Section \ref{sec:proof_th_opt_decen}, we show that an upper bound on $\mathsf{GNDT}_{\mathrm{D}}(\mu,\alpha,\beta)$ takes the form of the centralized delivery time in \eqref{eq:T_cen}, yet with a lower coded-multicasting gain.
It follows that the insights that follow Theorem \ref{th_cen}, derived in the light of the centralized achievable GDoF, extend to the decentralized setting.
\section{Outer Bound}
\label{sec:outer_bound}
In this section, we obtain an outer bound (upper bound) for the GDoF.
Since it is more convenient to work with the GNDT in \eqref{eq:link_deliverytime_gdof}, the
outer bound is derived in terms of a lower bound on $ \mathsf{GNDT}(\mu,\alpha,\beta)$.
\begin{theorem} \label{th_lower_bound}
For the symmetric cache-aided MISO BC under partial CSIT described in Section \ref{sec:problem_setting}, a lower bound on the optimal GNDT is given by
	\begin{equation} \label{eq:cut_set_bound}
	\mathsf{GNDT}(\mu,\alpha,\beta) \geq \max_{s \in \{1,2,\dots,K\}} \mathsf{GNDT}^{\mathrm{lb}}_s(\mu,\alpha,\beta),
	\end{equation}
	where $\mathsf{GNDT}^{\mathrm{lb}}_s(\mu,\alpha,\beta)$ is defined as\footnote{For any $x \in \mathbb{R}$, we define $(x)^{+} \triangleq \max\{0,x\}$.}
	\begin{equation} \label{eq:cut_set_bound_s}
	\mathsf{GNDT}^{\mathrm{lb}}_s(\mu,\alpha,\beta) \triangleq \left(\frac{s}{1+(s-1)(1-(\alpha-\beta))}\left(1-\frac{M}{\left \lfloor{\frac{N}{s}} \right  \rfloor}\right) \right)^+.
	\end{equation}
\end{theorem}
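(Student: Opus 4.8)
The plan is to combine a Maddah-Ali--Niesen-style cut-set argument with the robust GDoF outer bound for a parallel MISO BC under partial CSIT, as anticipated in the introduction. Fix $s \in \{1,\dots,K\}$ and restrict attention to the first $s$ users. I would consider $\lfloor N/s \rfloor$ successive rounds of the delivery phase, where in round $r$ user $i \in [s]$ requests file $W_{(r-1)s+i}$. Since $s\lfloor N/s\rfloor \le N$, these demands are all distinct, so across the $\lfloor N/s\rfloor$ rounds the users $1,\dots,s$ collectively decode the distinct files $W_{1},\dots,W_{s\lfloor N/s\rfloor}$, whose total entropy is $s\lfloor N/s\rfloor F$ under the standard assumption of mutually independent, uniform files.

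Next I would set up the information-theoretic inequality. Each round is delivered over a block of $T$ channel uses, so the $s$ users observe their received sequences over a total of $\lfloor N/s\rfloor \cdot T$ channel uses. Because each decoded file is reconstructed, up to Fano's inequality, from the corresponding receiver's cache and its received signals, the total entropy $s\lfloor N/s\rfloor F$ is bounded above by the aggregate cache size $H(U_1,\dots,U_s)\le sMF$ plus the conditional mutual information between the decoded files and the received signals at the $s$ users given their caches. The key conceptual move is to view the $\lfloor N/s\rfloor$ rounds as $\lfloor N/s\rfloor$ parallel sub-channels of an $s$-user MISO BC and the caches as equivalent \emph{parallel side links}, and to bound this conditional mutual information by the sum capacity of the resulting parallel MISO BC.

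Here the robust GDoF outer bound enters: the sum GDoF of an $s$-receiver MISO BC (with at least $s$ transmit antennas, so that extra antennas do not raise the ceiling) under cross-link level $\alpha$ and CSIT quality $\beta$ is at most $s-(s-1)(\alpha-\beta) = 1+(s-1)(1-(\alpha-\beta))$, and this bound aggregates additively over the parallel sub-channels. Hence the information the $s$ users extract over the $\lfloor N/s\rfloor \cdot T$ channel uses is at most $[1+(s-1)(1-(\alpha-\beta))]\,\log(P)\,\lfloor N/s\rfloor\, T\,(1+o(1))$. Substituting the GNDT normalization $T \approx \mathsf{GNDT}(\mu,\alpha,\beta)\cdot F/\log(P)$, dividing by $\lfloor N/s\rfloor F$, and letting $P,F,T \to \infty$ yields
\[
s \;\le\; \frac{sM}{\lfloor N/s\rfloor} + \bigl[1+(s-1)(1-(\alpha-\beta))\bigr]\,\mathsf{GNDT}(\mu,\alpha,\beta),
\]
which rearranges to $\mathsf{GNDT}(\mu,\alpha,\beta) \ge \frac{s}{1+(s-1)(1-(\alpha-\beta))}\bigl(1-\tfrac{M}{\lfloor N/s\rfloor}\bigr)$. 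Taking the positive part, since the delivery time is nonnegative, and maximizing over $s$ gives \eqref{eq:cut_set_bound}.

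The main obstacle is the parallel MISO BC outer bound invoked in the third step. The plain cut-set bound of \cite{Maddah-Ali2014} fails under partial CSIT, so establishing the sum-GDoF ceiling $s-(s-1)(\alpha-\beta)$ requires the aligned-image-sets and bounded-density machinery of Davoodi and Jafar \cite{Davoodi2016,Davoodi2017a,Davoodi2018}, adapted so that the caches act as parallel side links and the bound accumulates additively across the $\lfloor N/s\rfloor$ rounds. The combinatorial demand construction and the entropy manipulations are otherwise routine; the delicate point is carrying the channel-uncertainty outer bound through the conditioning on the past rounds and on the cache contents without loss.
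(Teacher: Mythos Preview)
Your proposal is correct and follows essentially the same route as the paper: a Maddah-Ali--Niesen multi-demand argument over $\lfloor N/s\rfloor$ rounds for a subset of $s$ users, combined with the Davoodi--Jafar robust sum-GDoF ceiling $1+(s-1)(1-(\alpha-\beta))$, with caches treated as parallel side links. The only packaging difference is in how the cache conditioning is handled. You write the bound as ``total file entropy $\le$ cache entropy $+$ conditional mutual information given caches'' and then invoke the sum capacity; the paper instead makes the side-link replacement fully explicit by introducing a second parallel sub-channel of strength level $\gamma$ chosen so that $\gamma\cdot\mathsf{GNDT}_{\mathrm{md}}=M$, and then bounds the sum GDoF of the resulting cache-free two-sub-channel MISO BC (Lemma~\ref{lemma:AIS}), which contributes the extra $s\gamma$ term. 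This buys exactly what you identify as the delicate point: once caches are gone, each user decodes only from its own $(\bar{Y}_i^n,\bar{B}_i^n)$, so the per-user Fano inequalities and the aligned-image-sets difference-of-entropies bound apply directly, without needing to justify that the sum-capacity ceiling survives conditioning on cache contents (a step that, in your formulation, would otherwise risk collapsing to the looser cooperative bound $s$).
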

In the above, for any subset of $s \leq K$ users, the corresponding
$\mathsf{GNDT}^{\mathrm{lb}}_s(\mu,\alpha,\beta)$ in \eqref{eq:cut_set_bound_s}
is a lower bound on the optimal delivery time $\mathsf{GNDT}(\mu,\alpha,\beta)$.
It follows that the tightest of such lower bounds is obtained by maximizing
$\mathsf{GNDT}^{\mathrm{lb}}_s(\mu,\alpha,\beta)$ over $s$.
We also observe that $\mathsf{GNDT}^{\mathrm{lb}}_s(\mu,\alpha,\beta)$ depends on the parameters of the physical channel through the difference $(\alpha-\beta)$.
In particular, for a fixed number of users $s$, library size $N$ and cache size $M$, $\mathsf{GNDT}^{\mathrm{lb}}_s(\mu,\alpha,\beta)$
decreases when $(\alpha-\beta)$ decreases.
This is intuitively explained by the fact that decreasing $(\alpha-\beta)$
corresponds to higher (relative) CSIT quality, enabling larger spatial multiplexing gains which in turn
reduce the delivery time.

From Theorem  \ref{th_lower_bound} and \eqref{eq:link_deliverytime_gdof}, it is easily seen that an upper bound for the GDoF is given by
\begin{equation} \label{eq:cut_set_bound_GDoF}
\mathsf{GDoF}(\mu,\alpha,\beta) \leq \min_{s \in \{1,2,\dots,K\}} \frac{K}{\mathsf{GNDT}^{\mathrm{lb}}_s(\mu,\alpha,\beta)}.	
\end{equation}
The outer bound in Theorem \ref{th_lower_bound} is employed to prove the converse parts of Theorem \ref{th_cen} and Theorem \ref{th_decen} in the following sections.
In the remainder of this section, we present a proof for Theorem \ref{th_lower_bound}.
The proof relies on  two main ingredients summarized as follows.
\begin{enumerate}[(a)]
\item A lower bound on $\mathsf{GNDT}(\mu,\alpha,\beta)$ is obtained by considering a subset of $s \leq K$ users and a multi-demand communication, in which each user requests multiple distinct files.
\item Each cache memory is replaced with a parallel side link of capacity that can convey the information content of the cache to the user by the end of the multi-demand communication.
    By bounding the GDoF of this new channel, we bound the delivery time of the multi-demand communication.
\end{enumerate}
Similarities and differences between this proof and previous works are discussed at the end of this section.
\subsection{Multi-Demand Communication}
Consider a subset of $s \leq K$ users and a multi-demand communication over the cache-aided channel,
in which each user requests a set of $\left \lfloor{\frac{N}{s}}\right \rfloor$ distinct files and no file is requested by two
different users.
We denote the $\left \lfloor{\frac{N}{s}}\right \rfloor$  files requested by user $i$ as
$W_{d_{i}^{1}},\ldots,W_{d_{i}^{\left \lfloor{{N}/{s}}\right \rfloor}}$.
By the end of the communication, each user is able to recover the $\left\lfloor{\frac{N}{s}}\right\rfloor$ requested
files from the received signals and the local cache content.
The optimal delivery time for this multi-demand communication is denoted by $\mathsf{GNDT}_{\mathrm{md}}$,
which is also defined in the worst-case sense, i.e. for the worst-case amongst all possible multi-demands of $\left \lfloor{\frac{N}{s}}\right \rfloor$ files.
It is readily seen that $\mathsf{GNDT}_{\mathrm{md}}$ satisfies
\begin{equation}
\label{eq:delivery_time_multi_session}
\mathsf{GNDT}_{\mathrm{md}} \leq \left\lfloor{\frac{N}{s}}\right\rfloor\mathsf{GNDT}(\mu,\alpha,\beta)
\end{equation}
since we are ignoring $K-s$ users and it is always feasible to treat each demand of $s$ files separately in a consecutive manner.
Next, we transfer to an equivalent setup with no caches.
\subsection{Cache Replacement and Delivery Time Lower Bound}
Now consider a new MISO BC consisting of the same $K$ transmitters, with access to the same library of $N$ files, and the
$s \leq K$ users served in the multi-demand communication above.
However, users in this new channel are not equipped with caches.
Alternatively, communication is carried out over two parallel sub-channels.
The input-output relationship is given by
\begin{align}
\label{eq:signal_model_parallel_1}
Y_{i}(t) &= \sqrt{P} G_{ii}(t) X_{i}(t) +  \sum_{j = 1, j \neq i}^{K} \sqrt{P^{\alpha}} G_{ij}(t)  X_{j}(t) + Z_{i}(t) \\
\label{eq:signal_model_parallel_2}
B_{i}(t) &= \sqrt{P^{\gamma}}  A_{i}(t)  + C_{i}(t)
\end{align}
where \eqref{eq:signal_model_parallel_1} and \eqref{eq:signal_model_parallel_2} describe the first and second sub-channels, respectively.
All physical properties of \eqref{eq:signal_model_2}, described in Section \ref{subsec:physical channel}, are inherited by the first sub-channel in \eqref{eq:signal_model_parallel_1}.
For the second sub-channel, $A_{i}(t) \in \mathbb{C}$ is the signal transmitted to the $i$-th user with a
power constraint $\E \left(|A_{i}(t)|^{2} \right) \leq 1$, $B_{i}(t) \in \mathbb{C}$ is the signal received by the $i$-th user
and $C_{i}(t) \sim \mathcal{N}_{\mathbb{C}}(0,1)$ is the i.i.d. AWGN.
Each link in the second sub-channel remains constant over $t$ and has channel strength level $\gamma \geq 0$,
hence supports a transmission at rate $\gamma\log (P) + o\big(\log (P) \big)$ without influencing the rate over the first sub-channel.
Equivalently, $\gamma$ is the GDoF (or capacity in files per time-slot) of each individual link in the second sub-channel.

In this new MISO BC with parallel sub-channels, each user $i$ requests the
same $\left \lfloor{\frac{N}{s}}\right \rfloor$
files requested by the corresponding user in the multi-demand communication, i.e. $W_{d_{i}^{1}},\ldots,W_{d_{i}^{\left \lfloor{{N}/{s}}\right \rfloor}}$.
Each transmitter $j$ then generates the codewords $X_{j}^{n}$ and $A_{j}^{n}$, sent
over $n \in \mathbb{N}$ channel uses through the sub-channels in \eqref{eq:signal_model_parallel_1} and \eqref{eq:signal_model_parallel_2} respectively.
By the end of the communication, user $i$ recovers the $\left \lfloor{{N}/{s}}\right \rfloor$  requested files from the signals $Y_{i}^{n}$ and $B_{i}^{n}$, received through the sub-channels in \eqref{eq:signal_model_parallel_1} and \eqref{eq:signal_model_parallel_2} respectively.
The optimal (sum) GDoF of this new MISO BC, denoted by $\mathsf{GDoF}_{\mathrm{P}}(\alpha,\beta,\gamma) $,  is bounded above as follows.
\begin{lemma} \label{lemma:AIS}
For the $s$-user MISO BC, consisting of two parallel sub-channels, described in \eqref{eq:signal_model_parallel_1} and \eqref{eq:signal_model_parallel_2}, the optimal (sum) GDoF is bounded above as
	\begin{equation}
	\label{eq:GDoF_parallel}
	\mathsf{GDoF}_{\mathrm{P}}(\alpha,\beta,\gamma) \leq  (\alpha - \beta) + s\big( 1 - (\alpha - \beta) \big) + s\gamma.
	\end{equation}
\end{lemma}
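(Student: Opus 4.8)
The plan is to establish \eqref{eq:GDoF_parallel} by combining a genie-aided, degraded-order converse with the aligned image sets (AIS) machinery of Davoodi and Jafar \cite{Davoodi2016,Davoodi2017a,Davoodi2018}, handling the two parallel sub-channels \eqref{eq:signal_model_parallel_1} and \eqref{eq:signal_model_parallel_2} separately so that the side links contribute only the additive term $s\gamma$. First I would invoke Fano's inequality: since each user $i$ reliably decodes its requested files from $(Y_i^n,B_i^n)$, the normalized sum rate is bounded by a sum of mutual information terms. Fixing an arbitrary ordering of the $s$ users and supplying user $i$ with a genie carrying the messages of users $1,\dots,i-1$, the sum rate telescopes into a chain of conditional mutual informations $I(W_i;Y_i^n,B_i^n\mid W_{[i-1]},\hat{\mathcal G})$, which, following the AIS template, reduce to differences of conditional differential entropies involving the earlier users' received observations.

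Next I would peel off the parallel side links. Because the second sub-channel is orthogonal to the first and each $B_i$ link is a point-to-point AWGN channel of strength level $\gamma$, with inputs $A_i$ and noise $C_i$ independent of everything in \eqref{eq:signal_model_parallel_1}, the chain rule splits each term as $I(W_i;Y_i^n\mid\cdots)+I(W_i;B_i^n\mid Y_i^n,\cdots)$, and the second contribution is at most $H(B_i^n)\le n\gamma\log P+o(n\log P)$. Summing over the $s$ users yields the additive $s\gamma\log P$, i.e. the term $s\gamma$ after normalization. What remains is precisely the telescoping entropy-difference bound for sub-channel 1 in isolation, which is the classical $s$-user MISO BC with partial CSIT whose sum GDoF equals $(\alpha-\beta)+s(1-(\alpha-\beta))$ \cite{Davoodi2018}; this is exactly \eqref{eq:GDoF_cen_2} evaluated at $\mu=0$ with $K=s$. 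Here the AIS sum-set inequality controls how much user $i$ can learn beyond the previous users under partial CSIT of quality $\beta$, effectively charging $(1-(\alpha-\beta))$ per additional user for the zero-forcing-able bottom power levels and $(\alpha-\beta)$ for the single shared top level, under the bounded-density assumption and after the usual deterministic-model reduction.

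The main obstacle I anticipate is ensuring that the side links genuinely decouple from the AIS argument rather than corrupting the delicate telescoping bound on sub-channel 1. The AIS inequality is applied to the $Y$-signals conditioned on $\hat{\mathcal G}$ and the earlier users' $Y$-observations, so I must verify that augmenting the genie with the independent $B$-observations does not inflate these entropy differences; this follows from the independence of the two sub-channels but needs to be tracked carefully through each step of the chain. A secondary technicality is that the worst-case error probability is taken over all channel realizations consistent with the revealed CSIT $\hat{\mathcal G}$, so the deterministic reduction and the image-counting bound must hold uniformly in this conditional sense, which is exactly where the bounded-density assumption on $\hat{\mathcal G}$ and $\tilde{\mathcal G}$ is essential. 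Finally, collecting the two contributions and normalizing by $n\log P$ as $P\to\infty$ gives \eqref{eq:GDoF_parallel}.
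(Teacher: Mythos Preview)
Your proposal is correct and follows essentially the same approach as the paper. The only cosmetic difference is the order of operations: you peel off the side-link contribution first via the chain rule $I(W_i;Y_i^n,B_i^n\mid\cdots)=I(W_i;Y_i^n\mid\cdots)+I(W_i;B_i^n\mid Y_i^n,\cdots)$ and then telescope the $Y$-terms, whereas the paper telescopes the joint entropies $H(\bar Y_k^n,\bar B_k^n\mid\cdots)$ first and then splits each entropy difference $H^{\Delta}_k$ into its $Y$-part and $B$-part via the chain rule; both routes isolate the same AIS bound $(1-(\alpha-\beta))$ per entropy difference and the same trivial $\gamma$ per side link.
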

It is evident that the bound on $\mathsf{GDoF}_{\mathrm{P}}(\alpha,\beta,\gamma)$ in \eqref{eq:GDoF_parallel} depends
 on $\alpha$ and $\beta$ through their difference $(\alpha-\beta)$. For the extreme case of $(\alpha-\beta) = 0$,
 the parallel MISO BC enjoys full spatial multiplexing gains over the first sub-channel.
 On the other hand, for the other extreme of $(\alpha-\beta) = 1$, all spatial multiplexing gains are annihilated and the GDoF of the first sub-channel collapses to $1$. Note that the contribution from the second sub-channel is unaffected since it consists of non-interfering links.
The proof of Lemma \ref{lemma:AIS} is relegated to Appendix \ref{sec:proof_AIS_lemma}.
Next, we argue that by setting $\gamma$ such that
\begin{equation}
\label{eq:eq_d0}
\gamma \cdot \mathsf{GNDT}_{\mathrm{md}} = M
\end{equation}
the corresponding optimal delivery time of the new channel is a lower bound on the optimal total delivery time of the cache-aided multi-demand communication, i.e.
\begin{equation}
\label{eq:delivery_time_multi_session_lb}
\frac{s \left \lfloor{\frac{N}{s}}\right \rfloor }{\mathsf{GDoF}_{\mathrm{P}}(\alpha,\beta,\gamma)} \leq  \mathsf{GNDT}_{\mathrm{md}}.
\end{equation}
This follows by observing that \eqref{eq:eq_d0} guarantees that for each user $i$,
the content of the cache $U_{i}$ in the original channel
can be delivered over the second sub-channel in \eqref{eq:signal_model_parallel_2} using at most $\mathsf{GNDT}_{\mathrm{md}}$ time-slots.
Since this does not influence the GDoF achieved over the first sub-channel in \eqref{eq:signal_model_parallel_1},
any placement and delivery strategy implemented for the cache-aided multi-demand communication is feasible in the new channel
and will take at most $\mathsf{GNDT}_{\mathrm{md}}$ time-slots.
We proceed while assuming that \eqref{eq:eq_d0} holds.

By combining \eqref{eq:delivery_time_multi_session_lb} with Lemma \ref{lemma:AIS} and \eqref{eq:eq_d0}, followed by invoking \eqref{eq:delivery_time_multi_session}, we obtain
\begin{align}
\left \lfloor{\frac{N}{s}}\right \rfloor s & \leq \mathsf{GNDT}_{\mathrm{md}} \big(1+(s-1)(1-(\alpha-\beta))+s\gamma \big) \\
& = \mathsf{GNDT}_{\mathrm{md}} \big(1+(s-1)(1-(\alpha-\beta)) \big) + sM \\
& \leq \mathsf{GNDT}(\mu,\alpha,\beta) \left \lfloor{\frac{N}{s}}\right \rfloor \big(1+(s-1)(1-(\alpha-\beta)) \big) + sM.
\end{align}
After some rearrangement and by considering that the delivery time is non-negative, we obtain
\begin{equation}
\label{eq:eq_Tsalpha_lb}
\mathsf{GNDT}(\mu,\alpha,\beta) \geq \left(\frac{s}{1+(s-1)(1-(\alpha-\beta))}\left(1-\frac{M}{\left \lfloor{\frac{N}{s}} \right  \rfloor}\right) \right)^+.
\end{equation}
The lower bound in \eqref{eq:eq_Tsalpha_lb} is further tightened by maximizing over all possible sizes of user subsets, i.e. $s \in [K]$, from which the result in \eqref{eq:cut_set_bound} directly follows.
\subsection{Insights and Relation to Prior Works}
The multi-demand communication to a subset of users corresponds to the cut-set-based bound in \cite{Maddah-Ali2014}, while the cache replacement is inspired by \cite{Zhang2017}.
However, it is worthwhile highlighting that bounding the DoF under partial current and perfect delayed CSIT and side links (after cache replacement) in \cite{Zhang2017} is very different from bounding the GDoF under only partial current CSIT and side links in Lemma \ref{lemma:AIS}.
In particular, the DoF upper bound in \cite{Zhang2017} follows the footsteps of \cite{Chen2016},
and is essentially based on a genie-aided argument.
Such argument does not work for the DoF/GDoF with only partial current CSIT and is known to give a loose bound in general.
The proof of Lemma \ref{lemma:AIS} is hence based on the outer bounds in
 \cite{Davoodi2016,Davoodi2017a,Davoodi2018}, which rely on the aligned image sets approach under channel uncertainty.

It is also worthwhile highlighting that the GDoF upper bound in Lemma \ref{lemma:AIS} is achievable through separate coding over the two sub-channels, i.e. there are no synergistic gains to be exploited through joint coding.
This comes in contrast to the setting in \cite{Zhang2017}, where jointly coding over the parallel sub-channels (after cache replacement)
can strictly outperform separate coding.
The influence of this synergy (or the lack of it) is clear when we revert back to the cache-aided channels.
In particular, we saw in Theorem \ref{th_cen} that the considered cache-aided MISO BC collapses to the shared-link setting in \cite{Maddah-Ali2014} when $(\alpha - \beta) = 1$.
However, even when current CSIT is completely absent in \cite{Zhang2017}, the synergy between caches and delayed CSIT
leads to an improved performance compared to the shared-link setting.
\section{Centralized Placement}
\label{sec:centralized}
In this section, we treat the centralized setting and prove Theorem \ref{th_cen}.
We start with the achievability and then we prove order-optimality using the outer bound in Theorem \ref{th_lower_bound}.
\subsection{Achievability scheme}
Here we present a centralized scheme which achieves the delivery time given by $\mathsf{GNDT}_{\mathrm{C}}(\mu,\alpha,\beta)$ in \eqref{eq:T_cen},
and hence the GDoF given by $\mathsf{GDoF}_{\mathrm{C}}(\mu,\alpha,\beta)$ in Theorem \ref{th_cen}.
This scheme builds upon and generalizes the one proposed for the cache-aided MISO BC in \cite{Lampiris2017}.
The key difference is that the scheme in \cite{Lampiris2017} is tuned to a special case in which only finite precision CSIT (i.e. $\beta = 0$) is available, while the one proposed here bridges the gap by considering all relevant levels of partial CSIT, i.e. $\beta \in [0,\alpha]$.

A key ingredient of the achievability scheme is the transmission of common and private codewords during the delivery phase.
We start by treating this physical-layer aspect through the following result.
\begin{lemma}
\label{lemma:private_common_GDoF}
Consider the $K$-user MISO BC with signal model given by \eqref{eq:signal_model_2} and properties described
in Section \ref{subsec:physical channel}.
Further assume that the transmitter has a common message $W^{(\mathrm{c})}$, intended to all user, and private messages
$W_1^{(\mathrm{p})},\ldots,W_K^{(\mathrm{p})}$, where $W_i^{(\mathrm{p})}$ is intended only to user $i$.
We achieve the GDoF
\begin{align}
\label{eq:GDoF_common}
\mathsf{GDoF}^{(\mathrm{c})} & = (\alpha - \beta)  \\
\label{eq:GDoF_private}
\mathsf{GDoF}_{i}^{(\mathrm{p})} & = 1 -  (\alpha - \beta), \ \forall i \in [K]
\end{align}
where $\mathsf{GDoF}^{(\mathrm{c})} $  is the GDoF achieved by the common message and
$\mathsf{GDoF}_{i}^{(\mathrm{p})}$ is the GDoF achieved by the
$i$-th private message.
\end{lemma}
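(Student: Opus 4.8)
The plan is to establish a rate-splitting transmission scheme that operates the physical channel at a corner point of its common-private GDoF region, exploiting the partial CSIT of quality $\beta$ to enable zero-forcing in a fraction of the available signal power levels. The core idea, following the signal-space partitioning approach in the classical MISO BC literature \cite{Yang2013,Joudeh2016,Davoodi2018}, is to split the transmit power into two layers: a bottom layer of height $1-(\alpha-\beta)$ power levels dedicated to private (zero-forced) signalling, and a top layer of height $(\alpha-\beta)$ power levels carrying the common message. First I would construct the transmit signal as $X_j(t) = X_j^{(\mathrm{c})}(t) + X_j^{(\mathrm{p})}(t)$, where the common symbol(s) are allocated full power $P$ while each private symbol $X_j^{(\mathrm{p})}$ is power-controlled to $\sim P^{-(\alpha-\beta)}$ and precoded using the channel estimate $\hat{G}_{ij}$ to zero-force interference at undesired receivers in an order-optimal sense.

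The key analytical step is to verify the two GDoF values by tracking signal power levels at each receiver. For the private messages, the precoding based on $\hat{\mathcal{G}}$ leaves residual interference at undesired user $i$ that scales as $P^{\alpha} \cdot P^{-\beta} \cdot P^{-(\alpha-\beta)} = P^{-(\alpha-\beta)}$ from each of the cross-links, owing to the estimation error variance $\sim P^{-\beta}$ and the power-control factor $\sim P^{-(\alpha-\beta)}$; this residual sits at or below the noise floor, so it does not degrade the private GDoF. The desired private signal at user $i$ arrives at power $P \cdot P^{-(\alpha-\beta)} = P^{1-(\alpha-\beta)}$, yielding $\mathsf{GDoF}_i^{(\mathrm{p})} = 1-(\alpha-\beta)$ as claimed in \eqref{eq:GDoF_private}. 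For the common message, every receiver first decodes $W^{(\mathrm{c})}$ by treating the private signals (which occupy only the bottom $1-(\alpha-\beta)$ levels) as noise; since the common signal occupies the top $(\alpha-\beta)$ power levels above this private-signal floor, it is decodable at every user with $\mathsf{GDoF}^{(\mathrm{c})} = (\alpha-\beta)$, establishing \eqref{eq:GDoF_common}. After decoding and subtracting the common message, each user recovers its private message from the cleaned-up bottom layer.

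I expect the main obstacle to be the rigorous justification that the residual inter-user interference from imperfect zero-forcing remains at or below the noise floor under the bounded-density partial-CSIT model, rather than merely in an expected or nominal sense. This requires invoking the bounded-density assumptions on $\hat{\mathcal{G}}$ and $\tilde{\mathcal{G}}$ from Section \ref{subsubsec:partial_CSIT} to control the leakage uniformly, and appealing to achievability results for the classical MISO BC with partial CSIT (e.g. \cite{Joudeh2016,Davoodi2018}) to ensure the claimed per-user private rates are simultaneously achievable. A secondary technical point is confirming that decoding the common message while treating the private layer as noise is valid at \emph{every} receiver simultaneously — but this follows from the symmetry of the setup and the fact that the private-signal aggregate power is bounded by $\sim P^{1-(\alpha-\beta)}$ at each user, so the common signal enjoys an effective SNR exponent of exactly $(\alpha-\beta)$ everywhere. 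The remaining verification of the power constraint $\E(|X_j|^2)\leq 1$ is routine, since summing the full-power common component and the power-controlled private component keeps the total bounded.
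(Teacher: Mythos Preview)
Your approach is essentially identical to the paper's: superimpose a full-power common codeword on power-controlled, zero-forced private codewords (private power $\sim P^{\beta-\alpha}$), then decode successively with the common message first. One arithmetic slip: the residual cross-interference exponent is $\alpha-\beta-(\alpha-\beta)=0$, i.e.\ $O(1)$, not $P^{-(\alpha-\beta)}$ as you wrote; your conclusion that it sits at the noise floor is nonetheless correct. The paper additionally notes that the $i$-th entry of user $k$'s zero-forcing vector (the one hitting the strong direct link of undesired user $i$) must scale as $O(\sqrt{P^{\alpha-1}})$, which is what keeps the direct-link leakage also at $O(1)$ --- a detail worth making explicit in your write-up.
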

The GDoF in  \eqref{eq:GDoF_common} and \eqref{eq:GDoF_private} is achieved using signal space partitioning \cite{Davoodi2018,Yuan2016}.
Using the terminology of signal power levels to explain this partitioning,
the upper $(\alpha - \beta)$ power levels are occupied by the common message while the bottom  $1 -  (\alpha - \beta)$ power levels are reserved for the private messages.
Note that the transmission of the common message requires no CSIT, while the transmission of the private messages is carried out using zero-forcing and power control, and hence may rely on the available partial CSIT.
Therefore, in the extreme case of $(\alpha - \beta) = 1$ (i.e. finite precision CSIT and equal strength paths), spatial multiplexing gains achieved through zero-forcing and power control collapse and the corresponding private messages will have a GDoF of zero.
The full proof of Lemma \ref{lemma:private_common_GDoF} is relegated to Appendix \ref{sec:proof_private_common_GDoF}.

In the following, we focus on $\mu \in \{\frac{1}{K},\frac{2}{K}, \dots, \frac{K-1}{K} \}$, such that $K\mu$ is an integer.
For $\mu = 0$, no caching is possible and the GDoF-optimal transmission strategy is given in \cite{Davoodi2018}.
For the other extreme of $\mu=1$, we have $\mathsf{GNDT}_{\mathrm{C}}(1,\alpha,\beta)=0$ as each user can store the entire library.
For the remaining $\mu$, where $K\mu$ is not necessarily an integer, $\mathsf{GNDT}_{\mathrm{C}}(\mu,\alpha,\beta)$ is obtained by
memory-sharing over the schemes corresponding to $\mu \in \big\{0,\frac{1}{K},\frac{2}{K}, \dots, \frac{K-1}{K}, 1 \big\}$, as pointed out in \cite{Maddah-Ali2014}.
\subsubsection{Placement phase}
The placement is analogous to \cite{Maddah-Ali2014}
and does not depend on the parameters specific the considered channel, e.g. transmitting antennas, $\alpha$ and $\beta$.
We use $m_{\mathrm{C}} \triangleq \mu K$ for notational briefness and to facilitate reusing some parts in the following section for the decentralized case.
Let $\Omega=\{ \mathcal{T} \subseteq [K]: |\mathcal{T}|= m_{\mathrm{C}}  \}$ be the family of all subsets of users with cardinality $m_{\mathrm{C}}$.
Each file $W_l \in \mathcal{W}$ is split into ${K \choose m_{\mathrm{C}}}$ non overlapping, equal size, subfiles
labeled as $W_{l,\mathcal{T}}$, for all $\mathcal{T} \in \Omega$, where each subfile consists of $F/{K \choose m_{\mathrm{C}} }$ bits.
User $i$ caches all the subfiles $W_{l, \mathcal{T}}$ such that $i \in \mathcal{T}$ and $l \in [N]$.
Hence, the corresponding cache memory is filled as $U_i=\{W_{l,\mathcal{T}}: \: \mathcal{T} \in \Omega, \: i \in \mathcal{T}, \: l \in [N]\}$.
Each user stores $N \binom{K-1}{m_{\mathrm{C}}-1}$ subfiles which corresponds  to a total of $MF$ bits, hence satisfying the memory constraint.
\subsubsection{Delivery phase}
\label{subsubsec:delivery_phase_cen}
During the delivery phase, the tuple $\mathbf{d}$ of all user demands is revealed, where each user $i$ makes a request for file $W_{d_i}$.
Since user $i$ has all subfiles $W_{d_i,\mathcal{T} }$ such that $i \in \mathcal{T}$, the transmitter has to deliver
all subfiles $W_{d_i,\mathcal{T}}$ such that $i \notin \mathcal{T}$, for all users $i \in [K]$.
This corresponds to a total of $K(1-\mu)F$ bits to be delivered over the wireless channel.

The transmitter splits each subfile $W_{d_i,\mathcal{T}}$, with $i \notin \mathcal{T}$,
into a common mini-subfile $W_{d_i,\mathcal{T}}^{(\mathrm{c})}$ and a private mini-subfile $W_{d_i,\mathcal{T}}^{(\mathrm{p})}$ such that
$W_{d_i,\mathcal{T}}=\big(W_{d_i,\mathcal{T}}^{(\mathrm{c})}, W_{d_i,\mathcal{T}}^{(\mathrm{p})}\big)$.
The two mini-subfiles $W_{d_i,\mathcal{T}}^{(\mathrm{c})}$ and $W_{d_i,\mathcal{T}}^{(\mathrm{p})}$ have sizes $q |W_{d_i,\mathcal{T}}|$ bits and $(1-q) |W_{d_i,\mathcal{T}}|$ bits respectively, where $|W_{d_i,\mathcal{T}}|$ is the size of file $W_{d_i,\mathcal{T}}$ and $q$ is the file splitting ratio given by
\begin{equation}
\label{eq:file_split_ratio}
q=\frac{(1+ m_{\mathrm{C}} )(\alpha-\beta)}{K(1-(\alpha-\beta)) + (1+  m_{\mathrm{C}} )(\alpha-\beta)}.
\end{equation}
All common mini-subfiles are coded using the techniques in the original coded-multicasting scheme in \cite{Maddah-Ali2014}.
In particular, subsets of $1+ m_{\mathrm{C}}$ common mini-subfiles $W_{d_i,\mathcal{T}}^{(\mathrm{c})}$ are combined together using a bitwise XOR operation to generate multicasting messages intended for subsets of $1+ m_{\mathrm{C}}$ users as follows
\begin{equation}
\label{eq:XOR}
W^{\mathrm{(c)}}_{\mathcal{S}}=\oplus_{i \in \mathcal{S}}W^{(\mathrm{c})}_{d_i, \mathcal{S} \setminus \{i\}}
\end{equation}
for all $\mathcal{S} \in \Theta$, where $\Theta=\{\mathcal{S} \subseteq [K]: |\mathcal{S}|= 1+ m_{\mathrm{C}}\}$.
All multicasting messages $W^{\mathrm{(c)}}_{\mathcal{S}}$ are encoded into a common codeword  $X^{(\mathrm{c})}$,
while all private mini-subfiles $W_{d_i,\mathcal{T}}^{(\mathrm{p})}$ intended to user $i$ are encoded into the private codeword $X_i^{(\mathrm{p})}$.
Next, the transmission of the common and private codewords over the wireless channel is carried out as described in Appendix \ref{sec:proof_private_common_GDoF}.

By decoding $X^{(\mathrm{c})}$, each user $i$ retrieves the multicasting messages $W^{\mathrm{(c)}}_{\mathcal{S}}$ for all $\mathcal{S} \in \Theta$.
Hence, user $i$ recovers all missing common mini-subfiles by combining with the content of its local cache as in \cite{Maddah-Ali2014}.
For example, for some $\mathcal{T}$ such that $i \notin \mathcal{T}$, user $i$ solves for the missing $W_{d_i,\mathcal{T}}^{(\mathrm{c})}$
using XOR combining of $W^{\mathrm{(c)}}_{\mathcal{S}}$, where $\mathcal{S}=\mathcal{T} \cup \{i\}$, with the pre-stored $m_{\mathrm{C}}$ common mini-subfiles $W_{d_k,\mathcal{S} \setminus \{k\} }^{(\mathrm{c})}$ with  $k \in \mathcal{T}$.
After decoding $X^{(\mathrm{c})}$, and removing its contribution from the received signal as explained in Appendix \ref{sec:proof_private_common_GDoF}, user $i$ decodes the private codeword $X_i^{(\mathrm{p})}$, from which the missing private mini-subfiles $W_{d_i,\mathcal{T}}^{(\mathrm{p})}$, with $\mathcal{T}$  such that $i \notin \mathcal{T}$, are retrieved.
At this stage, the entire requested file $W_{d_i}$ is recovered.
\subsubsection{Achievable Delivery Time}
The shared-link-type transmission, taking place over $X^{(\mathrm{c})}$, delivers a total of $qK(1-\mu)$ files (by excluding the parts already cached)
at rate $(\alpha-\beta)(1+m_{\mathrm{C}})$ files per time slot, where $(\alpha-\beta)$ is the GDoF of the physical channel as seen from Lemma \ref{lemma:private_common_GDoF} and $(1+m_{\mathrm{C}})$ is the gain due to coded-multicasting.
Hence, the delivery time for the shared link layer is
\begin{equation}
\frac{Kq(1-\mu)}{(\alpha-\beta)(1+m_{\mathrm{C}})}=\frac{K \left(1-\mu \right)}{K \left(1-(\alpha-\beta) \right)+ \left(1+ m_{\mathrm{C}} \right)(\alpha-\beta)}.
\end{equation}
On the other hand, each $X_i^{(\mathrm{p})}$ in the zero-forcing layer delivers a total of $(1-q)(1-\mu)$ files at rate
$1-(\alpha-\beta)$ files per time slot, as seen from Lemma \ref{lemma:private_common_GDoF}.
Hence, the delivery time for this layer is
\begin{equation}
\frac{K(1-q)(1-\mu)}{K\big(1-(\alpha-\beta)\big)}=\frac{K \left(1-\mu \right)}{K \left(1-(\alpha-\beta) \right)+ \left(1+ m_{\mathrm{C}} \right)(\alpha-\beta)}.
\end{equation}
Since the two layers take place in parallel, the total delivery time is also given by
\begin{equation}
\mathsf{GNDT}_{\mathrm{C}}(\mu,\alpha,\beta)=\frac{K \left(1-\mu \right)}{K \left(1-(\alpha-\beta) \right)+ \left(1+ m_{\mathrm{C}} \right)(\alpha-\beta)}.
\end{equation}
As $\mathsf{GNDT}_{\mathrm{C}}(\mu,\alpha,\beta)$ is achievable, then the corresponding GDoF given by ${\mathsf{GDoF}}_{\mathrm{C}}(\mu,\alpha,\beta)$ is achievable.
\subsection{Converse} \label{sec:converse_cen}
Here we prove the converse in (\ref{eq:converse_cen}), which is equivalent to showing order-optimality of $\mathsf{GNDT}_{\mathrm{C}}(\mu,\alpha,\beta)$, i.e.
$\mathsf{GNDT}_{\mathrm{C}}(\mu,\alpha,\beta)/\mathsf{GNDT}(\mu,\alpha,\beta) \leq 12$.
Since $\mathsf{GNDT}_{\mathrm{C}}(\mu,\alpha,\beta)$ and $\mathsf{GNDT}^{\mathrm{lb}}_s(\mu,\alpha,\beta)$ only depend on the difference $(\alpha-\beta)$, with a slight abuse of notation we define
\begin{equation} \label{eq:T_cen_delta}
\mathsf{GNDT}_{\mathrm{C}}(\mu,\delta) \triangleq  \frac{K \left(1-\mu \right)}{K \left(1-\delta \right)+ \left(1+K\mu \right)\delta}
\end{equation}
and
\begin{equation} \label{eq:cut_set_bound_s_delta}
\mathsf{GNDT}^{\mathrm{lb}}_s(\mu,\delta) \triangleq \left( \frac{s}{1+(s-1)(1-\delta)}\left(1-\frac{M}{\left \lfloor{\frac{N}{s}} \right  \rfloor}\right) \right)^+.
\end{equation}
where $\delta \in [0,1]$, $\mathsf{GNDT}_{\mathrm{C}}(\mu,\delta = \alpha-\beta) = \mathsf{GNDT}_{\mathrm{C}}(\mu,\alpha,\beta)$
and $\mathsf{GNDT}^{\mathrm{lb}}_s(\mu,\delta=\alpha-\beta) =\mathsf{GNDT}^{\mathrm{lb}}_s(\mu,\alpha,\beta)$.
Note $\mu \in \{0,\frac{1}{K},\frac{2}{K}, \dots, \frac{K-1}{K},1 \}$ is assumed in \eqref{eq:T_cen_delta},
where the lower convex envelope is taken for the remaining points in $\mu \in [0,1]$.
From the above, the lower bound in \eqref{eq:cut_set_bound} is rewritten as
\begin{equation} \label{eq:cut_set_bound_delta}
\mathsf{GNDT}(\mu,\alpha,\beta) \geq \max_{s \in \{1,2,\dots,K\}} \mathsf{GNDT}^{\mathrm{lb}}_s(\mu,\delta=\alpha-\beta)
\end{equation}
In the remaining part, we work with $\mathsf{GNDT}_{\mathrm{C}}(\mu,\delta)$ and  $\mathsf{GNDT}^{\mathrm{lb}}_s(\mu,\delta)$ for convenience.
We show in Appendix \ref{sec:proof_th_cent} that for any $\mu$, there exists a particular $s \in [K]$ such that $\mathsf{GNDT}_{\mathrm{C}}(\mu,\delta)/\mathsf{GNDT}^{\mathrm{lb}}_{s}(\mu,\delta) \leq 12$  for all $\delta \in [0,1]$.
Since the right-hand-side of \eqref{eq:cut_set_bound_delta} is bounded below by
$\mathsf{GNDT}^{\mathrm{lb}}_s(\mu,\delta)$ for any $s \in [K]$, the order-optimality within a factor of $12$ follows.
This concludes the proof of the converse.
\section{Decentralized Placement}
\label{sec:decentralized}
In this section, we prove Theorem \ref{th_decen} which considers the decentralized setting.
As in Section \ref{sec:centralized}, we start with the achievability and then proceed to prove order-optimality.
\subsection{Achievability Scheme}
Here we propose a decentralized scheme
which achieves the delivery time given by $\mathsf{GNDT}_{\mathrm{D}}(\mu,\alpha,\beta)$ in \eqref{eq:T_decen}, and hence the GDoF given by $\mathsf{GDoF}_{\mathrm{D}}(\mu,\alpha,\beta)$ in Theorem \ref{th_decen}.
We start with the placement phase.
\subsubsection{Placement Phase}
This is similar to the procedure in the original decentralized coded-caching paper  \cite{Maddah-Ali2015a},
and hence does not depend on the wireless channel parameters.
Each user $i$ stores a subset of $\mu F$ bits from each file, chosen uniformly at random.
Therefore, each bit of each file is stored in some subset of users\footnote{For a set $\mathcal{S}$, the power set $2^{\mathcal{S}}$ consists of all subsets of $\mathcal{S}$ (including $\mathcal{S}$ itself) and the empty set $\emptyset$. Note that
we consider finite $[K]$, i.e. $K$ does not go to infinity. This guarantees that the power set is not an uncountable set.}
$\tilde{\mathcal{T}} \in 2^{[ K ]}$, where $|\tilde{\mathcal{T}}| \in \{0,1,\ldots,K\}$.
For some $l \in [N]$, we use $W_{l,\tilde{\mathcal{T}}}$ to denote the bits of file $W_l$ which are stored by all
users in ${\tilde{\mathcal{T}}}$, where each $W_{l,\tilde{\mathcal{T}}}$ is referred to as a subfile henceforth.
It is readily seen that $W_l$ can be reconstructed from $\big\{W_{l,\tilde{\mathcal{T}}}: \: \tilde{\mathcal{T}} \in 2^{[ K ]} \big\}$.
\subsubsection{Delivery Phase}
User $i$ requires all subfiles $W_{d_i,\tilde{\mathcal{T}}}$, such that $i \notin \tilde{\mathcal{T}}$, in order to recover the requested file $W_{d_i}$.
The delivery phase takes place over $K$ sub-phases indexed by $m \in \{0,1,\ldots,K-1\}$.
In the $m$-th sub-phase, the transmitter delivers all subfiles $W_{d_i,\tilde{\mathcal{T}}}$, such that
$i \in [K]$ and $i \notin \tilde{\mathcal{T}}$, with $|\tilde{\mathcal{T}}| = m$.
Note that $m$ goes up to $K-1$ since for $|\tilde{\mathcal{T}}| = K$, the corresponding subfiles are pre-stored by all users.

Focusing on the $m$-th delivery sub-phase, delivery is carried out as described in Section \ref{subsubsec:delivery_phase_cen} for the centralized setting, while replacing $m_{\mathrm{C}}$ in Section \ref{subsubsec:delivery_phase_cen} by $m$.
This is due to the fact that each subfile to be delivered during the $m$-th decentralized delivery sub-phase is pre-stored by $m$ users instead of $m_{\mathrm{C}}$ users in centralized delivery.
It follows that coded-multicasting messages have order $1+m$ in the $m$-th decentralized delivery sub-phase compared to $1+m_{\mathrm{C}}$ in centralized delivery, which is due to the random decentralized placement.
Note that when performing the XOR operation in \eqref{eq:XOR} for the decentralized setting, all subfiles are assumed to be zero-padded to the length of the longest subfile \cite{Maddah-Ali2015a}.
By the end of the $K$ delivery sub-phases, the entire requested files  are recovered by the users.

Note that in sub-phase $m = 0$, there are no coded-multicasting opportunities as this sub-phase delivers parts which are not pre-stored by any user.
Hence, the transmission here is similar to the centralized setting with $\mu = 0$, which corresponds to transmission in the classical MISO BC with no caches \cite{Davoodi2018}.
\subsubsection{ Achievable Delivery Time}
Consider the $m$-th sub-phase and an arbitrary subset of users $\tilde{\mathcal{T}}$ with size $m$.
For each file $W_l$, $l \in [N]$, the probability of any of its bits to be stored in the cache of some user in $\tilde{\mathcal{T}}$ is given by $\mu$.
Hence, the probability of this bit to be stored by exactly the $m$ users of  $\tilde{\mathcal{T}}$ is given by $\mu^{m}(1-\mu)^{K-m}$,
from which the expected number of bits stored by each of such users is given by $\mu^{m}(1-\mu)^{K-m}F$.
It follows that, as $F \rightarrow \infty$, the expected size of $W_{l,\tilde{\mathcal{T}}}$ is given by
\begin{equation}
\mu^{m}(1-\mu)^{K-m}F+o(F)
\end{equation}
where the term $o(F)$ is omitted in the following calculations.
Since there is a total of $\binom{K}{m}$ subsets of $m$ users, we have $\binom{K}{m}\mu^{m}(1-\mu)^{K-m}F$ bits of each file which are cached by exactly  $m$ users.

Now we proceed to calculated the number of bits of the file $W_{d_i}$, which are stored by exactly $m$ users, which
have to be delivered to user $i$.
Recall that user $i$ already has all subfiles $W_{d_i,\tilde{\mathcal{T}}}$, with $|\tilde{\mathcal{T}}|=m$ and $i \in \tilde{\mathcal{T}}$,
pre-stored.
Hence, user $i$ already has $\binom{K-1}{m-1}\mu^{m}(1-\mu)^{K-m}F$ bits of $W_{d_i}$ which are cached in exactly $m$ users.
Hence, the number of unavailable bits, contained in all subfiles $W_{d_i,\tilde{\mathcal{T}}}$ with $|\tilde{\mathcal{T}}|=m$ and $i \notin \tilde{\mathcal{T}}$, is given by $\binom{K-1}{m}\mu^{m}(1-\mu)^{K-m}F$.
Since there are $K$ users in total, the total number of files (obtained after normalizing by $F$)
which have to be delivered during the $m$-th sub-phase is given by
\begin{equation}
K \binom{K-1}{m}\mu^{m}(1-\mu)^{K-m}.
\end{equation}
A portion $q(m) = \frac{(1+m)(\alpha-\beta)}{(1+m)(\alpha - \beta) + K(1-(\alpha - \beta))}$ of such files are delivered with coded-multicasting gain   $1+m$ (i.e. simultaneously useful for $1+m$ users) over the common codeword with GDoF $(\alpha - \beta)$ files per time-slot.
On the other hand, the remaining portion of $1-q(m)$ is delivered over the private codewords with GDoF $K \left(1-(\alpha - \beta) \right)$  files per time-slot.
Hence, the delivery time of the $m$-th sub-phase is
\begin{equation}
 \frac{K \binom{K-1}{m} \mu^m\left(1-\mu\right)^{K-m}}{K(1-(\alpha-\beta))+(1+m)(\alpha-\beta)}.
\end{equation}
By summing over all $K$ sub-phases, the total delivery time is given by
\begin{equation}
\mathsf{GNDT}_{\mathrm{D}}(\mu,\alpha,\beta) = K \sum_{m=0}^{K-1}{\frac{\binom{K-1}{m} \mu^m\left(1-\mu\right)^{K-m}}{K(1-(\alpha-\beta))+(1+m)(\alpha-\beta)}}.
\end{equation}
It follows that the corresponding GDoF given by ${\mathsf{GDoF}}_{\mathrm{D}}(\mu,\alpha,\beta)$ is achievable.
\subsection{Converse} \label{sec:proof_th_opt_decen}
In this part, we prove the converse in (\ref{eq:converse_decen}), which is equivalent to showing order-optimality of $\mathsf{GNDT}_{\mathrm{D}}(\mu,\alpha,\beta)$, i.e.
 $\mathsf{GNDT}_{\mathrm{D}}(\mu,\alpha,\beta)/\mathsf{GNDT}(\mu,\alpha,\beta) \leq 12$.
As in the centralized setting, $\mathsf{GNDT}_{\mathrm{D}}(\mu,\alpha,\beta)$ only depends on the difference $\delta=(\alpha-\beta)$. Therefore, we work with
\begin{equation} \label{eq:T_decen_delta}
\mathsf{GNDT}_{\mathrm{D}}(\mu,\delta ) \triangleq K \sum_{m=0}^{K-1}{\frac{\binom{K-1}{m} \mu^m\left(1-\mu\right)^{K-m}}{K(1-\delta)+(1+m)\delta}}
\end{equation}
where $\mathsf{GNDT}_{\mathrm{D}}(\mu,\delta = \alpha-\beta ) = \mathsf{GNDT}_{\mathrm{D}}(\mu,\alpha,\beta)$.
Unlike $\mathsf{GNDT}_{\mathrm{C}}(\mu,\delta )$  in \eqref{eq:T_cen_delta}, $\mathsf{GNDT}_{\mathrm{D}}(\mu,\delta )$
does not have the desirable form which allows comparing it to the bound in  \eqref{eq:cut_set_bound_delta} directly.
Hence, the first (key) step of the converse is to derive an upper bound on $\mathsf{GNDT}_{\mathrm{D}}(\mu,\delta)$, denoted by $\mathsf{GNDT}^{\mathrm{ub}}_{\mathrm{D}}(\mu,\delta)$,
which takes the form of the centralized achievable delivery time in \eqref{eq:T_cen_delta}.
This is given in the following result.
\begin{lemma} \label{lemma_ineq_decen}
The decentralized delivery time  $\mathsf{GNDT}_{\mathrm{D}}(\mu,\delta)$ is bounded above as
	\begin{equation} \label{eq:decent_inequality_1}
	\mathsf{GNDT}_{\mathrm{D}}(\mu,\delta)  \leq \mathsf{GNDT}^{\mathrm{ub}}_{\mathrm{D}}(\mu,\delta)
= \frac{K \left(1-\mu \right)}{K(1-\delta) + (1+u)\delta}
	\end{equation}
	where $u$ is given by
\begin{equation} \label{eq:eq_intr_u}
u =\frac{K \left(1-\mu \right)}{\mathsf{GNDT}_{\mathrm{D}}(\mu,1)} - 1.
\end{equation}
\end{lemma}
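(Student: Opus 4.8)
The plan is to read $\mathsf{GNDT}_{\mathrm{D}}(\mu,\delta)$ as a weighted average over a binomial distribution and to reduce the claimed bound to a single application of Jensen's inequality. First I would isolate the weights: since $\binom{K-1}{m}\mu^m(1-\mu)^{K-m} = (1-\mu)\,\tilde{p}_m$ with $\tilde{p}_m \triangleq \binom{K-1}{m}\mu^m(1-\mu)^{K-1-m}$, the $\tilde{p}_m$ are the probabilities of a $\mathrm{Binomial}(K-1,\mu)$ variable, which I denote $m$, and hence sum to one. Pulling $K(1-\mu)$ out of \eqref{eq:T_decen_delta} gives
\begin{equation}
\mathsf{GNDT}_{\mathrm{D}}(\mu,\delta) = K(1-\mu)\,\E\left[\frac{1}{K(1-\delta)+(1+m)\delta}\right],
\end{equation}
the expectation being over $m$. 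Evaluating at $\delta=1$ yields $\mathsf{GNDT}_{\mathrm{D}}(\mu,1) = K(1-\mu)\,\E[1/(1+m)]$, so the defining relation \eqref{eq:eq_intr_u} reads $1+u = 1/\E[1/(1+m)]$; in other words, $1+u$ is precisely the \emph{harmonic mean} of $1+m$ under the law $\tilde{p}$.

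With this identification, the inequality \eqref{eq:decent_inequality_1} is, after cancelling the common factor $K(1-\mu)>0$, equivalent to
\begin{equation}
\E\left[\frac{1}{K(1-\delta)+(1+m)\delta}\right] \leq \frac{1}{K(1-\delta)+(1+u)\delta}.
\end{equation}
The key manoeuvre is the change of variable $Y \triangleq 1/(1+m)$, for which $\E[Y] = 1/(1+u)$. Writing $A \triangleq K(1-\delta) \geq 0$ and $s \triangleq \delta \geq 0$, each summand becomes $\tfrac{1}{A + s/Y} = \tfrac{Y}{AY+s}$, while the right-hand side becomes $\tfrac{\E[Y]}{A\,\E[Y]+s}$. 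Thus the claim is exactly $\E[\psi(Y)] \leq \psi(\E[Y])$ for $\psi(y) \triangleq \tfrac{y}{Ay+s}$. A direct differentiation gives $\psi''(y) = -2As/(Ay+s)^3 \leq 0$ for $y>0$, so $\psi$ is concave on $(0,\infty)$ and Jensen's inequality closes the argument. I would also record that $K(1-\delta)+(1+m)\delta = (1-\delta)\bigl(K-(1+m)\bigr)+(1+m) \geq 1+m \geq 1$ for $\delta\in[0,1]$ and $m\leq K-1$, so all denominators, and hence $\psi$, are well defined and strictly positive.

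The main obstacle is selecting the right reformulation. The tempting move is to apply Jensen directly to the convex map $x\mapsto 1/x$ at the affine-in-$m$ denominators; but convexity yields $\E[1/(\cdot)]\geq 1/\E[\cdot]$, a lower bound pointing the wrong way. The fix is to pass to the reciprocal variable $Y=1/(1+m)$, which simultaneously turns the harmonic-mean normalisation hidden in $u$ into an ordinary mean $\E[Y]$ and recasts each summand as the \emph{concave} function $\psi$, so that Jensen now delivers the desired upper bound. As consistency checks I would confirm that both inequalities hold with equality at the endpoints $\delta=0$ (where both sides equal $1/K$) and $\delta=1$ (by the very definition of $u$), confirming that the constructed upper bound is tight at the two extremes.
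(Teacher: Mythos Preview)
Your argument is correct and is essentially the same as the paper's: both normalise by $K(1-\mu)$ to expose the binomial weights, identify $1+u$ with the harmonic mean of $1+m$, and then apply Jensen's inequality to the concave map $y\mapsto y/(Ay+s)$ (the paper writes this as $f(v)=v/(1+v)$ after a rescaling of variables). Your presentation is somewhat more streamlined---working directly with $Y=1/(1+m)$ and $\psi$ rather than the paper's intermediate substitutions $c_m$, $c_m'$, $\tilde{c}'$---but the mathematical content is identical.
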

The proof of Lemma \ref{lemma_ineq_decen} is given in Appendix \ref{sec:proof_lemma_ineq_decen}.
One important consequence of Lemma \ref{lemma_ineq_decen} is that the expression in \eqref{eq:decent_inequality_1}
allows us to show order-optimality of  $\mathsf{GNDT}_{\mathrm{D}}^{\mathrm{ub}}(\mu,\delta)$ to within a factor of $12$
using similar techniques to the ones used for the centralized setting.
The details are relegated to Appendix \ref{sec:proof_order_opt_T}.
The order-optimality of $\mathsf{GNDT}_{\mathrm{D}}(\mu,\delta)$ to within a factor of $12$ follows, which concludes the converse.
\subsection{Gap Between Decentralized and Centralized Schemes} \label{subsec:gap_cen_decen}
From a straightforward inspection of \eqref{eq:T_cen_delta} and \eqref{eq:decent_inequality_1}, it can be seen that
for integer values of $K\mu$ (for which a close form of $\mathsf{GNDT}_{\mathrm{C}}(\mu,\delta)$ is obtained), we have
\begin{equation} \label{eq:eq_comparison_cen_decen}
\frac{\mathsf{GDoF}_{\mathrm{C}}(\mu,\delta)}{\mathsf{GDoF}_{\mathrm{D}}(\mu,\delta)} = \frac{\mathsf{GNDT}_{\mathrm{D}}(\mu,\delta)}{\mathsf{GNDT}_{\mathrm{C}}(\mu,\delta)} \leq
\frac{K(1-\delta)+(K\mu+1)\delta}{K(1-\delta)+(u+1)\delta}.
\end{equation}
We know that when $\delta=1$ (i.e. $\alpha=1$ and $\beta=0$),
all spatial multiplexing gains are lost and the achievable delivery times collapse to the ones in \cite{Maddah-Ali2014,Maddah-Ali2015a}.
Hence, it follows from the observations in \cite{Maddah-Ali2015a} (and then the proof in \cite{Yan2016}) that for $\delta=1$,
there is a small price to pay due to decentralization, making the ratio in \eqref{eq:eq_comparison_cen_decen} small.
By further examining the bound on the right-most side of \eqref{eq:eq_comparison_cen_decen},
it can be seen that it decreases when $\delta$ decreases, hence further reducing the price of decentralization.
For example, such price is minimal when $\delta=0$ (i.e. $\alpha=\beta$), where both the centralized and decentralized strategies achieve
the optimal delivery $\mathsf{GNDT}(\mu, \alpha, \beta=\alpha)=1-\mu$.
This is intuitive as with a decreased $\delta$, the system starts to rely more on spatial multiplexing gains and local caching gains and less on global caching gains, which are affected by decentralization.
Concretely, the gap in \eqref{eq:eq_comparison_cen_decen} is bounded above as follows.
\begin{corollary} \label{corollary_cen_decen}
For any $\delta \in [0,1]$ and $\mu \in [0,1]$, we have
	\begin{equation} \label{eq:cen_decen_corollary}
	\frac{\mathsf{GDoF}_{\mathrm{C}}(\mu,\delta)}{\mathsf{GDoF}_{\mathrm{D}}(\mu,\delta)} \leq 1.5.
	\end{equation}
\end{corollary}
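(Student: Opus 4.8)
The plan is to first prove \eqref{eq:cen_decen_corollary} at the corner points $\mu\in\{\tfrac1K,\dots,\tfrac{K-1}{K}\}$, where the closed form \eqref{eq:eq_comparison_cen_decen} applies, and then to lift the bound to arbitrary $\mu\in[0,1]$ through the memory-sharing structure of $\mathsf{GNDT}_{\mathrm{C}}$. At a corner point I would start from \eqref{eq:eq_comparison_cen_decen} and reduce to the single worst case $\delta=1$. Viewing the right-hand side of \eqref{eq:eq_comparison_cen_decen} as the linear-fractional map $\delta\mapsto \frac{K+(K\mu+1-K)\delta}{K+(u+1-K)\delta}$, its derivative carries the sign of $K\big((K\mu+1)-(u+1)\big)=K(K\mu-u)$, so the bound is nondecreasing in $\delta$ exactly when $u\le K\mu$. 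Evaluating \eqref{eq:T_decen_delta} at $\delta=1$ through the identity $\tfrac{1}{m+1}\binom{K-1}{m}=\tfrac1K\binom{K}{m+1}$ gives the classical closed form $\mathsf{GNDT}_{\mathrm{D}}(\mu,1)=\tfrac{1-\mu}{\mu}\big(1-(1-\mu)^{K}\big)$, so by \eqref{eq:eq_intr_u} we have $1+u=\tfrac{K\mu}{1-(1-\mu)^{K}}$ and the condition $u\le K\mu$ becomes $(1+K\mu)(1-\mu)^{K}\le 1$. This last inequality holds for all $\mu\in[0,1]$ because $g(\mu)\triangleq(1+K\mu)(1-\mu)^{K}$ has $g(0)=1$ and $g'(\mu)=-K(K+1)\mu(1-\mu)^{K-1}\le 0$, hence $g$ is nonincreasing. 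The worst case is therefore $\delta=1$.

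At $\delta=1$ the bound \eqref{eq:eq_comparison_cen_decen} collapses to the familiar centralized-versus-decentralized ratio
\[
\frac{K\mu+1}{u+1}=\frac{(1+K\mu)\big(1-(1-\mu)^{K}\big)}{K\mu},
\]
which I would bound by $\tfrac32$ by splitting on $j\triangleq K\mu\in\{1,\dots,K-1\}$. For $j\ge2$, the trivial bound $1-(1-\mu)^{K}\le 1$ gives $\tfrac{1+j}{j}=1+\tfrac1j\le\tfrac32$. For $j=1$ (which forces $K\ge2$, since $j=K$ is the degenerate $\mu=1$), the ratio equals $2\big(1-(1-\tfrac1K)^{K}\big)$, and the standard monotonicity of $(1-\tfrac1K)^{K}$ in $K$ yields $(1-\tfrac1K)^{K}\ge(1-\tfrac12)^{2}=\tfrac14$, so the ratio is at most $2\cdot\tfrac34=\tfrac32$, with equality exactly at $K=2,\mu=\tfrac12$. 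This recovers the known gap of \cite{Yan2016} and settles \eqref{eq:cen_decen_corollary} at every corner point, for every $\delta$.

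To reach arbitrary $\mu$, I would use that $\mathsf{GNDT}_{\mathrm{C}}(\cdot,\delta)$ is by definition the lower convex envelope of its corner values. Since the continuous expression $\widetilde{\mathsf{GNDT}}_{\mathrm{C}}(\mu,\delta)=\frac{K(1-\mu)}{K(1-\delta)+(1+K\mu)\delta}$ is convex in $\mu$ (a second-derivative check gives $\partial_\mu^2\widetilde{\mathsf{GNDT}}_{\mathrm{C}}=\frac{2K^2\delta(K+\delta)}{\left(K(1-\delta)+(1+K\mu)\delta\right)^{3}}\ge0$), the corner values form a convex sequence, no corner is skipped, and for $\mu=\lambda\mu_-+(1-\lambda)\mu_+$ between consecutive corners memory sharing gives the exact chord $\mathsf{GNDT}_{\mathrm{C}}(\mu,\delta)=\lambda\,\mathsf{GNDT}_{\mathrm{C}}(\mu_-,\delta)+(1-\lambda)\,\mathsf{GNDT}_{\mathrm{C}}(\mu_+,\delta)$. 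Combined with convexity of $\mathsf{GNDT}_{\mathrm{D}}(\cdot,\delta)$ (so that $\mathsf{GNDT}_{\mathrm{D}}(\mu,\delta)$ is at most the corresponding chord) and the elementary mediant inequality $\frac{a_1+a_2}{b_1+b_2}\le\max\{\frac{a_1}{b_1},\frac{a_2}{b_2}\}$, one obtains
\[
\frac{\mathsf{GNDT}_{\mathrm{D}}(\mu,\delta)}{\mathsf{GNDT}_{\mathrm{C}}(\mu,\delta)}\le\max\left\{\frac{\mathsf{GNDT}_{\mathrm{D}}(\mu_-,\delta)}{\mathsf{GNDT}_{\mathrm{C}}(\mu_-,\delta)},\frac{\mathsf{GNDT}_{\mathrm{D}}(\mu_+,\delta)}{\mathsf{GNDT}_{\mathrm{C}}(\mu_+,\delta)}\right\}\le\frac32,
\]
where the boundary endpoints $\mu_-=0$ (ratio $1$) and $\mu_+=1$ (where both times vanish) are handled directly.

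The main technical step — and the part I expect to require the most care — is the convexity of $\mathsf{GNDT}_{\mathrm{D}}(\cdot,\delta)$ in $\mu$, which is not automatic since it is a positively weighted sum of the terms $\binom{K-1}{m}\mu^{m}(1-\mu)^{K-m}$. I would prove it by writing $\mathsf{GNDT}_{\mathrm{D}}(\mu,\delta)=(1-\mu)\,h(\mu)$ with $h(\mu)=\sum_{m=0}^{K-1}w_m\binom{K-1}{m}\mu^{m}(1-\mu)^{K-1-m}$ the Bernstein polynomial of degree $K-1$ with control values $w_m=\tfrac{K}{K(1-\delta)+(1+m)\delta}$. Since $w_m$ is decreasing and convex in $m$, the shape-preservation properties of Bernstein polynomials give $h'\le0$ and $h''\ge0$, whence $\partial_\mu^2\mathsf{GNDT}_{\mathrm{D}}=(1-\mu)h''-2h'\ge0$. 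This closes the argument.
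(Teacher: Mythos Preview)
Your argument is correct. At the corner points you follow the paper closely: both you and the paper start from \eqref{eq:eq_comparison_cen_decen}, use $u\le K\mu$ (your derivative proof of $(1+K\mu)(1-\mu)^K\le1$ is a clean alternative to the paper's $(1+K\mu)\le(1+\mu)^K$ trick) to reduce to $\delta=1$, and then bound $\frac{(1+K\mu)(1-(1-\mu)^K)}{K\mu}$. Your self-contained case split $j=K\mu\ge2$ versus $j=1$ is a nice replacement for the paper's citation to \cite[Lem.~1]{Yan2016}.

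The genuine difference is in the passage from corner points to arbitrary $\mu$. The paper does \emph{not} prove convexity of $\mathsf{GNDT}_{\mathrm{D}}(\cdot,\delta)$; instead it works directly for every $\mu\in(0,1)$ by combining the continuous upper bound $\mathsf{GDoF}_{\mathrm{C}}(\mu,\delta)\le (1-\delta)\tfrac{K}{1-\mu}+\delta\tfrac{1+K\mu}{1-\mu}$ (valid because $\mathsf{GNDT}_{\mathrm{C}}$ is the lower convex envelope of a convex function) with the Lemma~\ref{lemma_ineq_decen} lower bound on $\mathsf{GDoF}_{\mathrm{D}}$. This yields the ratio bound $\frac{(1+K\mu)(1-(1-\mu)^K)}{K\mu}$ for all $\mu$, but that expression can exceed $1.5$ when $K=2$ and $\mu\in(1/2,1)$, which forces the paper into a separate sub-case analysis for $K=2$. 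Your route---prove the bound only at the corners and then interpolate via convexity of $\mathsf{GNDT}_{\mathrm{D}}$ plus the mediant inequality---trades that case split for the Bernstein shape-preservation lemma. The Bernstein argument is correct (the control values $w_m=\frac{K}{K(1-\delta)+(1+m)\delta}$ are indeed decreasing and convex in $m$, and the derivative formulas for Bernstein polynomials give $h'\le0$, $h''\ge0$), and it makes your proof uniform in $K$ and fully self-contained. The paper's route is shorter once \cite{Yan2016} is granted, but less elegant in the $K=2$ corner.
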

The above corollary is obtained by employing the results in
Theorem \ref{th_cen}, Lemma \ref{lemma_ineq_decen} and  \cite{Yan2016}.
The full proof is relegated to Appendix \ref{appendix:proof_corollary}.
\section{Conclusions}
\label{sec:conclusion}
In this paper, we characterized the optimal GDoF of the symmetric cache-aided MISO BC under partial CSIT up to a constant multiplicative factor.
Moreover, we showed that such GDoF characterization is robust to decentralization, i.e. we proposed a decentralized caching
strategy which attains an order-optimal GDoF performance.
In order to derive the GDoF results, we introduced the generalized normalized delivery time (GNDT) metric, which extends the normalized delivery time (NDT) metric in the same way the GDoF extends the DoF.
The GNDT is related to the reciprocal of the GDoF, and is generally easier to deal with when characterizing achievable and optimal performances.

At the heart of our converse proof is a GDoF outer bound for a parallel MISO BC with partial CSIT,
which extends a family of robust outer bounds based on the aligned image sets approach, initially developed in the context of classical networks with no caches, to cache-aided networks.
On the other hand, we showed that the order optimal GDoF takes a familiar weighted-sum form, often observed in classical networks (with no caches) under partial CSIT.
Achieving such GDoF relies on a key interplay between spatial multiplexing and coded-multicasting gains.

This work opens the door for  a number  of interesting extensions.
An intriguing direction is to consider a setting in which each transmitter can only store part of the library, hence enabling only partial transmitter cooperation as opposed to the full cooperation assumed in this work (e.g. general cache-aided interference network).
This setting generalizes the works in \cite{Xu2017,Hachem2018} to the GDoF framework under partial CSIT considered in this paper.
Progress along these lines is reported in \cite{Lampiris2017}, while limiting to absent CSIT and considering only achievability.
As observed in \cite{Xu2017,Hachem2018}, under such partial cooperating (through caching at the transmitters), the underlying physical channel is modeled by the X channel.
Hence, it is worthwhile highlighting in this context that for the X channel, the GDoF under partial CSIT  is still an open problem.
Another interesting direction is relaxing the symmetry in the channel.
However, one major difficulty here is the potential explosion in the number of channel parameters.
Therefore it is not surprising that such asymmetric GDoF characterizations are still open even in classical networks \cite{Davoodi2017,Davoodi2018}.
Last but no least, reducing the constant multiplicative factor of 12 is also of significant interest.
For the original shared-link setting, recent efforts managed to reduce the constant multiplicative factor \cite{Ghasemi2017,Yu2017}.
Our observations through numerical simulations, which show that the gap is much smaller than 12, provide hope that such
tightening may also be possible for the order-optimal characterizations presented in this paper.
\appendices
\section{Proof of Lemma \ref{lemma:AIS}}
\label{sec:proof_AIS_lemma}
The proof is based on the approach in \cite{Davoodi2016,Davoodi2017a,Davoodi2018}, where outer bounds under finite precision and partial CSIT are derived.
We follow the same overall steps in these works, while specializing to the specific setup considered here.
For simplicity and notational briefness, we focus on real channels.
The extension to complex channels follows along the lines of \cite{Davoodi2016,Davoodi2017a}.
We consider $s = K$ users. For general $s \leq K$, the exact same steps follow while considering only the corresponding $s$ rate bounds.
\subsection{Deterministic Channel Model}
The first step is to convert the channel into a deterministic equivalent with inputs and outputs all being integers.
This is given by
\begin{align}
\label{eq:deterministic_model_parallel_1}
\bar{Y}_{i}(t) &=  \lfloor G_{ii}(t) \bar{X}_{i}(t) \rfloor +  \sum_{j \in [K] \setminus \{i\} }
\lfloor  \bar{P}^{\alpha - 1} G_{ij}(t)  \bar{X}_{j}(t)\rfloor \\
\label{eq:deterministic_model_parallel_2}
\bar{B}_{i}(t) &= \bar{A}_{i}(t)
\end{align}
where $\bar{P} = \sqrt{P}$,
$\bar{X}_{i}(t)\in \{0,1,\ldots, \lfloor \bar{P} \rfloor  \}$ and
$\bar{A}_{i}(t) \in \{0,1,\ldots, \lfloor \bar{P}^{\gamma} \rfloor  \}$,
$\forall i \in [K]$.
It can be shown that a GDoF upper bound for the deterministic channel is also a GDoF upperbound
for the original channel using the same steps in \cite{Davoodi2016}.
Therefore we focus on the deterministic channel henceforth.
\subsection{Fano’s Inequality and Differences of Entropies}
For notational brevity, we define $M_{i} \triangleq \big( W_{d_{i}^{1}},\ldots,W_{d_{i}^{\left \lfloor{{N}/{s}}\right \rfloor}} \big)$ to denote the set of messages to be delivered to user $i$.
Moreover, we define $M_{[i:K]} \triangleq M_{i},\ldots,M_{K}$.
Using Fano's inequality, for user $k$ we have
\begin{align}
nR_{k} & \leq I \left(M_{k} ; \bar{Y}_{k}^{n},\bar{B}_{k}^{n} \mid M_{[k+1:K]}, \mathcal{G} \right) + o(n) \\
& \leq H \left(\bar{Y}_{k}^{n},\bar{B}_{k}^{n} \mid M_{[k+1:K]}, \mathcal{G} \right)
- H \left(\bar{Y}_{k}^{n},\bar{B}_{k}^{n} \mid M_{[k:K]}, \mathcal{G} \right) + o(n).
\end{align}
After omitting $o(n)$ and $o\left( \log(P) \right)$  terms, we obtain
\begin{equation}
\label{eq:parallel_sum_rate_upperbound}
n\sum_{k = 1}^{K} R_{k} \leq n (1 + \gamma) \log(\bar{P}) +
\sum_{k=2}^{K} \underbrace{H \left(\bar{Y}_{k-1}^{n},\bar{B}_{k-1}^{n} \mid M_{[k:K]}, \mathcal{G} \right)
	- H \left(\bar{Y}_{k}^{n},\bar{B}_{k}^{n} \mid M_{[k:K]}, \mathcal{G} \right)}_{H^{\Delta}_{k}}.
\end{equation}
Hence, the focus becomes to bound the differences of entropies $H^{\Delta}_{2},\ldots,H^{\Delta}_{K}$.
\subsection{Bounding the Differences of Entropies}
Focusing on the term $H^{\Delta}_{k}$, $k \in [2:K]$, we proceed  as follows:
\begin{align}
\label{eq:H_delta_ineq_1}
H^{\Delta}_{k}  = \ & H \left(\bar{Y}_{k-1}^{n},\bar{B}_{k-1}^{n} \mid M_{[k:K]}, \mathcal{G} \right)
	- H \left(\bar{Y}_{k}^{n},\bar{B}_{k}^{n} \mid M_{[k:K]}, \mathcal{G} \right) \\
\nonumber
 =  \ &  H \left(\bar{Y}_{k-1}^{n} \mid M_{[k:K]}, \mathcal{G} \right)
	- H \left(\bar{Y}_{k}^{n} \mid M_{[k:K]}, \mathcal{G} \right) \\
\label{eq:H_delta_ineq_2}
   &+
H \left(\bar{B}_{k-1}^{n} \mid M_{[k:K]}, \mathcal{G},\bar{Y}_{k-1}^{[n]} \right)
	- H \left(\bar{B}_{k}^{n} \mid M_{[k:K]}, \mathcal{G}, \bar{Y}_{k}^{n}\right) \\
\label{eq:H_delta_ineq_3}
\leq \ &  H \left(\bar{Y}_{k-1}^{n} \mid M_{[k:K]}, \mathcal{G} \right)
	- H \left(\bar{Y}_{k}^{n} \mid M_{[k:K]}, \mathcal{G} \right) + n \log \big(\bar{P}^{\gamma} + 1\big).
\end{align}
In the above, \eqref{eq:H_delta_ineq_2} is obtained from the chain rule, while
\eqref{eq:H_delta_ineq_3} follows from $H \big(\bar{B}_{k}^{n} \mid M_{[k:K]}, \mathcal{G}, \bar{Y}_{k}^{n}\big)  \geq 0$
and $H \big(\bar{B}_{k-1}^{n} \mid M_{[k:K]}, \mathcal{G},\bar{Y}_{k-1}^{n} \big) \leq
H \big(\bar{B}_{k-1}^{n} \big) \leq \sum_{t = 1}^{n} H \big(\bar{B}_{k-1}(t) \big)\leq n \log \big(\bar{P}^{\gamma} + 1\big)$.
Now it remains to bound the difference of entropies
$H \left(\bar{Y}_{k-1}^{n} \mid M_{[k:K]}, \mathcal{G} \right)
	- H \left(\bar{Y}_{k}^{n} \mid M_{[k:K]}, \mathcal{G} \right)$
under partial CSIT and the bounded density assumptions as described in Section \ref{subsubsec:partial_CSIT}.
This difference is bounded above as
\begin{equation}
\label{eq:H_delta_ineq_4}
H \left(\bar{Y}_{k-1}^{n} \mid M_{[k:K]}, \mathcal{G} \right)
	- H \left(\bar{Y}_{k}^{n} \mid M_{[k:K]}, \mathcal{G} \right) \leq
n\big(1 - (\alpha - \beta) \big)\log(\bar{P}) + o \big(  \log(\bar{P}) \big).
\end{equation}
The inequality in \eqref{eq:H_delta_ineq_4} follows directly from \cite{Davoodi2018} (see the proofs of \cite[Th. 1]{Davoodi2018} and
\cite[Th. 2]{Davoodi2018}), and is obtained using the aligned image sets approach \cite{Davoodi2016}.
Intuitively, under perfect CSIT (i.e. $\beta = \alpha$), the transmitter uses zero-forcing to create a maximal difference of entropies, in a GDoF sense, between $\bar{Y}_{k-1}^{n}$ and $\bar{Y}_{k}^{n}$.
On the other hand, when all paths have equal strengths and the CSIT is limited to finite precision (i.e. $\alpha = 1$ and $\beta = 0$),
a positive difference of entropies in a GDoF sense cannot be created.
Between the two extremes, the transmitter benefits from path-loss and partial CSIT, through power control and zero-forcing,
to create a positive difference of entropies which is bounded above by  $1$, in a GDoF sense.

By combining the bounds in \eqref{eq:H_delta_ineq_4} and  \eqref{eq:H_delta_ineq_3}, we obtain
\begin{equation}
\label{eq:H_delta_ineq_5}
H^{\Delta}_{k} \leq n\big(\gamma + 1 - (\alpha - \beta) \big)\log(\bar{P}) + o \big(  \log(\bar{P}) \big).
\end{equation}
The bound in \eqref{eq:H_delta_ineq_5} holds for all $k \in [2:K]$. By plugging \eqref{eq:H_delta_ineq_5} into \eqref{eq:parallel_sum_rate_upperbound}, the result in \eqref{eq:GDoF_parallel} directly follows.
\section{Proof of Lemma \ref{lemma:private_common_GDoF}}
\label{sec:proof_private_common_GDoF}
First, let us rewrite the signal model in \eqref{eq:signal_model_2} in vector form as
\begin{equation}
Y_{i} = \sqrt{P} \:[\hat{G}_{i1} \cdots \hat{G}_{iK}] \: \mathbf{Q}_i \: \mathbf{X} + \sqrt{P^{1-\beta}} \: [\tilde{G}_{i1} \cdots \tilde{G}_{iK}] \: \mathbf{Q}_i \: \mathbf{X} + Z_i
\end{equation}
where $\mathbf{X} \triangleq [ X_{1}  \cdots  X_{K}]^{\Trn}$ is the signal transmitted from the $K$ transmitters
and $\mathbf{Q}_i$ is a $K \times K$ diagonal matrix
with $1$ as the $(i,i)$-th entry and $\sqrt{P^{\alpha-1}}$ as  the remaining diagonal entries.
Note that we ignore the time index for brevity.
The messages $W^{(\mathrm{c})}$ and $W^{(\mathrm{p})}_1, \dots, W^{(\mathrm{p})}_K$ are encoded into
unit power independent Gaussian codewords
$X^{(\mathrm{c})}$ and $X^{(\mathrm{p})}_1, \dots, X^{(\mathrm{p})}_K$, respectively.
The transmitted signal is then constructed as
\begin{equation}
\mathbf{X}=
\mathbf{D}\left( \sqrt{1-P^{\beta-\alpha}} \mathbf{V}^{(\mathrm{c})}X^{(\mathrm{c})} +  \sqrt{P^{\beta-\alpha}} \sum_{k=1}^{K}{\mathbf{V}_k^{(\mathrm{p})}}X_k^{(\mathrm{p})} \right).
\end{equation}
In the above, $\mathbf{D}$ is a $K \times K$ diagonal matrix where the
$(j,j)$-th entry is $O(1)$ in $P$, and is chosen such that the
power constraint  $\E \left(|X_{j}|^{2} \right) \leq 1$ is not violated.
$\mathbf{V}^{(\mathrm{c})}$ is a generic (random) unit vector and
$\mathbf{V}_k^{(\mathrm{p})}  \triangleq \Big[ V^{(\mathrm{p})}_{k1} \;
\cdots  V^{(\mathrm{p})}_{kK} \; \Big]^{\Trn}$
is a zero-forcing unit vector designed using the channel estimates such that
\begin{equation}
\sqrt{P^{\alpha}} \left( \hat{G}_{i1} V^{(\mathrm{p})}_{k1} + \cdots + \sqrt{P^{1-\alpha}}\hat{G}_{ii} V^{(\mathrm{p})}_{ki} + \cdots + \hat{G}_{iK} V^{(\mathrm{p})}_{kK} \right)=0, \ \forall i \neq k.
\end{equation}
It is simple to verify from the zero-forcing condition that $V^{(\mathrm{p})}_{ki}$ cannot scale faster than
$O(\sqrt{P^{\alpha-1}})$ for all $k \neq i$.
Hence, the received signal of user $i$ is rewritten as
\begin{equation}
Y_{i} = \sqrt{P} a_{i}^{(\mathrm{c})} X^{(\mathrm{c})} + \sqrt{P^{1+\beta-\alpha}} a_{ii}^{(\mathrm{p})} X_{i}^{(\mathrm{p})}  +  \sum_{k=1, k \neq i}^K a_{ik}^{(\mathrm{p})} X_k^{(\mathrm{p})} + Z_i
\end{equation}
where $a_{i}^{(\mathrm{c})}$ and  $a_{ik}^{(\mathrm{p})}$, for all $i,k \in [K]$, are all $O(1)$.

Each user $i$ decodes $X^{(\mathrm{c})}$ by treating interference as noise and recovers $W^{(\mathrm{c})}$.
As $X^{(\mathrm{c})}$ is received with power $O(P)$, while interference plus noise has  power $O(P^{1+\beta-\alpha})$, it follows  that $X^{(\mathrm{c})}$ supports a rate of $ (\alpha-\beta)\log (P) + o\big( \log (P) \big)$.
Then, each user $i$ proceeds to remove the contribution of $X^{(\mathrm{c})}$ from the received signal
and decodes its own  $X_i^{(\mathrm{p})}$ while treating the remaining interference as noise, from which $W_i^{(\mathrm{p})}$ is recovered.
As $X_i^{(\mathrm{p})}$ is received with power $O(P^{1+\beta-\alpha})$, while the remaining interference plus noise has power $O(1)$,
it follows that $X_i^{(\mathrm{p})}$  supports a rate of $(1+\beta-\alpha) \log (P) + o\big( \log (P) \big)$.
\begin{remark}
	It is worthwhile highlighting that the achievable GDoF in Lemma \ref{lemma:private_common_GDoF} (shown in this appendix) can be inferred from \cite{Davoodi2018}.
	One key difference, however, is that the MISO BC considered in \cite{Davoodi2018} has private messages only, and rate-splitting is used to multicast part of the private messages as a common codeword decoded by all users. This relationship between the MISO BC with private messages and its counterpart with a common message under partial CSIT was first observed in \cite{Yang2013}.
\end{remark}
\section{Proofs of Order Optimality}
Here we provide proofs for the order-optimality parts of Theorem \ref{th_cen} and Theorem \ref{th_decen}.
We start with an instrumental lemma used throughout the proofs in the following subsections.
\begin{lemma} \label{lemma_f}
For parameters $K,\mu$ and $s$ defined previously, if $\frac{K}{s (1+K\mu)} \geq 1$, then the function given by
\begin{equation}
f(\delta;K,\mu,s)=\frac{1+(s-1)(1-\delta)}{K(1-\delta)+(1+K\mu)\delta}
\end{equation}
is non-decreasing in $\delta \in [0,1]$.
\end{lemma}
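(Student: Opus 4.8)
The plan is to exploit the fact that $f$ is a Möbius-type function, i.e.\ a ratio of two affine functions of $\delta$; such a function has a derivative of constant sign on any interval where its denominator does not vanish, and that sign is governed by a single constant. First I would put numerator and denominator into canonical affine form. The numerator simplifies to $1 + (s-1)(1-\delta) = s - (s-1)\delta$ and the denominator to $K(1-\delta) + (1+K\mu)\delta = K + (1+K\mu-K)\delta$. Writing $N(\delta) = a_0 + a_1\delta$ with $a_0 = s$ and $a_1 = -(s-1)$, and $D(\delta) = b_0 + b_1\delta$ with $b_0 = K$ and $b_1 = 1 + K\mu - K$, sets up the quotient rule cleanly.

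Second, I would check that $D(\delta) > 0$ for all $\delta \in [0,1]$, which guarantees that $f$ is well-defined and differentiable on $[0,1]$. Since $D$ is affine with $D(0) = K > 0$ and $D(1) = 1 + K\mu > 0$, positivity on the whole interval is immediate. Third, and this is the crux, I would compute the numerator of $f'$. For a ratio of affine functions one has $\frac{d}{d\delta}\!\left(N/D\right) = (a_1 b_0 - a_0 b_1)/D(\delta)^2$, where the bracketed quantity is a constant independent of $\delta$. Substituting the coefficients gives $a_1 b_0 - a_0 b_1 = -K(s-1) - s(1+K\mu-K) = K - s(1+K\mu)$, so the sign of $f'(\delta)$ is exactly the sign of $K - s(1+K\mu)$.

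Finally, the hypothesis $\frac{K}{s(1+K\mu)} \geq 1$ is equivalent to $K - s(1+K\mu) \geq 0$, whence $f'(\delta) \geq 0$ on $[0,1]$ and $f$ is non-decreasing there, as claimed. The only real obstacle is the algebraic cancellation in evaluating $a_1 b_0 - a_0 b_1$: the two $sK\delta$-type cross terms drop out and the remaining terms collapse precisely to $K - s(1+K\mu)$. This is what makes the stated hypothesis emerge exactly as the monotonicity threshold, so although the computation is elementary it is the step that deserves care.
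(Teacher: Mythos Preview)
Your proposal is correct and follows essentially the same approach as the paper: compute $f'(\delta)$ directly, observe that its numerator is the constant $K - s(1+K\mu)$ (equivalently, $-\bigl(s(1+K\mu)-K\bigr)$), and conclude non-negativity from the hypothesis. The paper's proof is simply the one-line version of your argument.
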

\begin{proof}
The derivative of $f(\delta;K,\mu,s)$ with respect to $\delta$ is given by $\frac{df}{d\delta}=-\frac{s(1+K\mu)-K}{(K(1-\delta)+(1+K\mu)\delta)^2}$,
which is non-negative for $K\geq s(1+K\mu)$.
\end{proof}
\subsection{Order Optimality of $\mathsf{GNDT}_{\mathrm{C}}(\mu,\delta)$}
\label{sec:proof_th_cent}
We show here that for any $\mu$, there exists a particular $s \in [K]$ such that  $\mathsf{GNDT}_{\mathrm{C}}(\mu,\delta)/\mathsf{GNDT}^{\mathrm{lb}}_{s}(\mu,\delta) \leq 12$  for all $\delta \in [0,1]$.
We handle the two cases $K \leq 12$ and $K \geq 13$ separately.
Starting with $K \leq 12$, consider a generic \mbox{$\delta \in [0,1]$}.
By setting $s=1$ in (\ref{eq:cut_set_bound_s}), we get that $\mathsf{GNDT}_1^{\mathrm{lb}}(\mu,\delta) = 1-\mu$.
On the other hand, \mbox{$\mathsf{GNDT}_{\mathrm{C}}(\mu,\delta) \leq \mathsf{GNDT}_{\mathrm{C}}(\mu,1) \leq K(1-\mu) $}.
Hence,
 ${\mathsf{GNDT}_{\mathrm{C}}(\mu,\delta)}/{\mathsf{GNDT}^{\mathrm{lb}}_{1}(\mu,\delta)} \leq 12$.

Next, we consider $K \geq 13$. As in \cite{Maddah-Ali2014}, we split the problem in three sub-cases: the sub-case \mbox{$0 \leq \mu \leq \frac{1.1}{K}$}, the sub-case \mbox{$\frac{1.1}{K} < \mu  \leq 0.092$} and the sub-case $0.092 < \mu \leq 1$.
We start with $0 \leq \mu \leq \frac{1.1}{K}$.
For $\delta=1$, we have  $\mathsf{GNDT}_{\mathrm{C}}(\mu,1) \leq \mathsf{GNDT}_{\mathrm{C}}(0,1) = K$.
By setting $s = \left \lfloor {0.275K} \right \rfloor$, we know from \cite{Maddah-Ali2014} that
${\mathsf{GNDT}^{\mathrm{lb}}_{s}(\mu,1)} \geq K/12.$
On the other hand, for a generic $\delta \in [0,1]$, the following upper bound holds
\begin{equation} \frac{\mathsf{GNDT}_{\mathrm{C}}(\mu,\delta)}{\mathsf{GNDT}^{\mathrm{lb}}_{s}(\mu,\delta)} \leq \frac{\mathsf{GNDT}_{\mathrm{C}}(0,\delta)}{\mathsf{GNDT}^{\mathrm{lb}}_{s}(\mu,\delta)} = \underbrace{\frac{1+(s-1)(1-\delta)}{K(1-\delta)+\delta}}_{f(\delta;K,0,s)} \cdot \frac{K}{s\left(1-\frac{M}{\left \lfloor{\frac{N}{s}} \right  \rfloor}\right)}.
\end{equation}
Since $\frac{K}{s} \geq \frac{1}{0.275}> 1$, from Lemma \ref{lemma_f} it follows that $f(\delta;K,0,s)$ is non-decreasing in $\delta \in [0,1]$.
Hence,
\begin{equation}
\frac{\mathsf{GNDT}_{\mathrm{C}}(\mu,\delta)}{\mathsf{GNDT}^{\mathrm{lb}}_{s}(\mu,\delta)} \leq \frac{\mathsf{GNDT}_{\mathrm{C}}(0,\delta)}{\mathsf{GNDT}^{\mathrm{lb}}_{s}(\mu,\delta)} \leq \frac{\mathsf{GNDT}_{\mathrm{C}}(0,1)}{\mathsf{GNDT}^{\mathrm{lb}}_{s}(\mu,1)} \leq 12.
\end{equation}
We proceed to the sub-case $\frac{1.1}{K} < \mu  \leq 0.092$.
Let $\tilde{\mu}$ be the largest number in $[0,\mu]$ such that $K\tilde{\mu}$ is an integer.
We know from \cite{Maddah-Ali2014} that
\mbox{$\mathsf{GNDT}_{\mathrm{C}}(\mu,1) \leq  \mathsf{GNDT}_{\mathrm{C}}(\tilde{\mu},1) \leq \frac{1}{\mu}.$}
By setting $s= \left \lfloor {\frac{0.3}{\mu}} \right \rfloor$, we also know from \cite{Maddah-Ali2014} that
$ {\mathsf{GNDT}^{\mathrm{lb}}_{s}(\mu,1)} \geq \frac{1}{12 \mu}. $
Considering a generic $\delta \in [0,1]$, we write
\begin{equation}
\frac{\mathsf{GNDT}_{\mathrm{C}}( \mu,\delta)}{\mathsf{GNDT}^{\mathrm{lb}}_{s}( \mu,\delta)} \leq \frac{\mathsf{GNDT}_{\mathrm{C}}(\tilde{\mu},\delta)}{\mathsf{GNDT}^{\mathrm{lb}}_{s}( \mu,\delta)} = \underbrace{\frac{1+(s-1)(1-\delta)}{K(1-\delta)+(1+K\tilde{\mu})\delta}}_{f(\delta;K,\tilde{\mu},s)} \cdot \frac{K \left(1-\tilde{\mu}\right)}{s\left(1-\frac{M}{\left \lfloor{\frac{N}{s}} \right  \rfloor}\right)}.
\end{equation}
As $\frac{K}{s(1+K\tilde{\mu})} \geq \frac{1}{0.3} \frac{K \mu}{1+K \mu} > 1$, Lemma \ref{lemma_f}  implies that $f(\delta;K,\tilde{\mu},s)$ is non-decreasing in $\delta \in [0,1]$.
Hence,
\begin{equation} \frac{\mathsf{GNDT}_{\mathrm{C}}(\mu,\delta)}{\mathsf{GNDT}^{\mathrm{lb}}_{s}( \mu,\delta)} \leq \frac{\mathsf{GNDT}_{\mathrm{C}}(\tilde{\mu},\delta)}{\mathsf{GNDT}^{\mathrm{lb}}_{s}(\mu,\delta)} \leq \frac{\mathsf{GNDT}_{\mathrm{C}}(\tilde{\mu},1)}{\mathsf{GNDT}^{\mathrm{lb}}_{s}(\mu,1)} \leq 12.
\end{equation}

Finally, we look at the sub-case $0.092 < \mu \leq 1$ and we consider a generic $\delta \in [0,1]$.
By setting $s=1$, we get $ \mathsf{GNDT}_1^{\mathrm{lb}}(\mu,\delta) = 1-\mu$.
Moreover, from \cite{Maddah-Ali2014}, we know that
$ {\mathsf{GNDT}_{\mathrm{C}}(\mu,\delta)} \leq \mathsf{GNDT}_{\mathrm{C}}(\mu,1) \leq 12(1-\mu)$.
Hence
${\mathsf{GNDT}_{\mathrm{C}}(\mu,\delta)}/{\mathsf{GNDT}^{\mathrm{lb}}_{1}(\mu,\delta)} \leq 12$.
This concludes the proof.
\subsection{Order Optimality of $\mathsf{GNDT}_{\mathrm{D}}(\mu,\delta)$}
\label{sec:proof_order_opt_T}
As for the centralized setting, we show that for any $\mu$, there exists a particular $s \in [K]$ such that $\mathsf{GNDT}^{\mathrm{ub}}_{\mathrm{D}}(\mu,\delta)/\mathsf{GNDT}^{\mathrm{lb}}_{s}(\mu,\delta) \leq 12$  for all $\delta \in [0,1]$.
We start with the following lemma.
\begin{lemma} \label{lemma_bound_u}
The value $u$, defined in \eqref{eq:eq_intr_u}, satisfies  $u \leq K \mu$ for all $\mu \in [0,1)$.
\end{lemma}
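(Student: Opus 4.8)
The plan is to strip the claim down to a single inequality about $\mathsf{GNDT}_{\mathrm{D}}(\mu,1)$. Starting from the definition of $u$ in \eqref{eq:eq_intr_u} and using that $\mathsf{GNDT}_{\mathrm{D}}(\mu,1)>0$, the assertion $u\le K\mu$ is equivalent, after clearing the denominator, to
\begin{equation}
\mathsf{GNDT}_{\mathrm{D}}(\mu,1) \ge \frac{K(1-\mu)}{1+K\mu}.
\end{equation}
Thus the whole lemma reduces to lower bounding the single quantity $\mathsf{GNDT}_{\mathrm{D}}(\mu,1)$, which removes the awkward $\delta$-dependence and makes the binomial sum the only object to analyze.

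The key step is to recognize the probabilistic structure of this sum. Setting $\delta=1$ in \eqref{eq:T_decen_delta} and factoring out one power of $(1-\mu)$, I would write
\begin{equation}
\mathsf{GNDT}_{\mathrm{D}}(\mu,1) = K(1-\mu)\sum_{m=0}^{K-1}\frac{\binom{K-1}{m}\mu^m(1-\mu)^{K-1-m}}{1+m} = K(1-\mu)\,\E\!\left[\frac{1}{1+X}\right],
\end{equation}
where $X$ is a $\mathrm{Binomial}(K-1,\mu)$ variable, since $\binom{K-1}{m}\mu^m(1-\mu)^{K-1-m}=\Prob(X=m)$. Because $x\mapsto 1/(1+x)$ is convex on $[0,\infty)$, Jensen's inequality gives $\E[1/(1+X)]\ge 1/(1+\E[X])$, and with $\E[X]=(K-1)\mu\le K\mu$ this yields $\E[1/(1+X)]\ge 1/(1+K\mu)$. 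Multiplying through by $K(1-\mu)$ delivers exactly the bound above, completing the argument.

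As a more elementary alternative that sidesteps the probabilistic language, I could evaluate the sum in closed form via the identity $\frac{1}{1+m}\binom{K-1}{m}=\frac{1}{K}\binom{K}{m+1}$ and a reindexing, which gives $\mathsf{GNDT}_{\mathrm{D}}(\mu,1)=\frac{(1-\mu)\big(1-(1-\mu)^K\big)}{\mu}$ and hence $u=\frac{K\mu}{1-(1-\mu)^K}-1$. Along this route the claim $u\le K\mu$ collapses to the scalar inequality $(1+K\mu)(1-\mu)^K\le 1$, which I would prove by setting $g(\mu)=\ln(1+K\mu)+K\ln(1-\mu)$, noting $g(0)=0$, and checking $g'(\mu)=K\big(\tfrac{1}{1+K\mu}-\tfrac{1}{1-\mu}\big)\le 0$ on $[0,1)$.

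I do not anticipate a genuine obstacle: the only real subtlety is spotting the binomial/expectation structure (or the summation identity) so that Jensen, respectively the closed form, can be applied; everything afterward is mechanical. The boundary case $\mu=0$ gives $u=0=K\mu$ with equality, and is handled automatically by both routes, while $\mu\to 1$ poses no difficulty since all bounds remain finite and tight.
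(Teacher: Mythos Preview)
Your proof is correct. Your second route coincides with the paper's argument: both reduce the claim to the scalar inequality $(1+K\mu)(1-\mu)^K\le 1$, the only differences being that the paper quotes the closed form for $\mathsf{GNDT}_{\mathrm{D}}(\mu,1)$ from \cite{Maddah-Ali2015a} rather than rederiving it, and then dispatches the scalar inequality via $1+K\mu\le(1+\mu)^K$ and $(1+\mu)^K(1-\mu)^K=(1-\mu^2)^K\le 1$ instead of your derivative check. Your first route, recognizing $\mathsf{GNDT}_{\mathrm{D}}(\mu,1)=K(1-\mu)\,\E[1/(1+X)]$ with $X\sim\mathrm{Binomial}(K-1,\mu)$ and applying Jensen, is genuinely different and cleaner: it bypasses the closed-form evaluation altogether and even yields the slightly sharper intermediate bound $\mathsf{GNDT}_{\mathrm{D}}(\mu,1)\ge K(1-\mu)/\big(1+(K-1)\mu\big)$. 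The paper's route has the advantage that the closed form is reused later (e.g., in \eqref{eq:def_1_plus_mu}), so computing it is not wasted effort in context; for this lemma in isolation, however, your Jensen argument is the more economical one.
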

\begin{proof}
    We focus on $\mu > 0$ as  $u=0$ for $\mu=0$.
    By definition of $u$ in \eqref{eq:eq_intr_u}, we have
    \begin{equation}
    \frac{K(1-\mu)}{K\mu}\left( 1-\left(1-\mu\right)^K  \right) = \frac{K(1-\mu)}{1+u}
    \end{equation}
    which follows from $\mathsf{GNDT}_{\mathrm{D}}(\mu,1)=\frac{K(1-\mu)}{K\mu}\left( 1-\left(1-\mu\right)^K  \right)$, as shown in \cite{Maddah-Ali2015a}.
	Hence, showing that $u \leq K\mu$ it is equivalent to showing that
	\begin{align}
	\frac{K(1-\mu)}{K\mu}\left(1-(1-\mu)^K \right) & \geq \frac{K(1-\mu)}{1+K\mu} \\
\Rightarrow (K\mu+1)\left( 1-(1-\mu)^K \right)& \geq K\mu \\
\label{lemma_bound_u_inequality}
\Rightarrow 1  \geq (K\mu+1)(1-\mu)^K &
	\end{align}
     The inequality in \eqref{lemma_bound_u_inequality} is shown to hold by observing that
    $\mu > 0$ and
	$K\mu+1 \leq (1+\mu)^K$, from which we obtain
	$(K\mu+1)(1-\mu)^K \leq (1+\mu)^K (1-\mu)^K = (1-\mu^2)^K \leq 1.$
	Hence, $u \leq K\mu$ holds.
\end{proof}
Equipped with Lemma \ref{lemma_bound_u}, the remainder of the proof follows the same procedures  in Appendix \ref{sec:proof_th_cent}.
In particular, we consider the two cases $K \leq 12$ and $K \geq 13$.
For the case $K \leq 12$, by setting $s=1$ in (\ref{eq:cut_set_bound_s}), we get that $\mathsf{GNDT}_1^{\mathrm{lb}}(\mu,\delta) = 1-\mu$.
On the other hand, we have $\mathsf{GNDT}^{\mathrm{ub}}_{\mathrm{D}}(\mu,\delta) \leq \mathsf{GNDT}^{\mathrm{ub}}_{\mathrm{D}}(\mu,1) \leq K(1-\mu) $.
It follows that ${\mathsf{GNDT}^{\mathrm{ub}}_{\mathrm{D}}(\mu,\delta)}/{\mathsf{GNDT}^{\mathrm{lb}}_{1}(\mu,\delta)} \leq 12$.

Next, we focus on $K \geq 13$. As in \cite{Maddah-Ali2015a}, we consider three separate sub-cases: the sub-case \mbox{$0 \leq \mu \leq 1/K$}, the sub-case \mbox{$1/K < \mu \leq  1/12$} and the sub-case \mbox{$1/12 < \mu \leq 1$}.
We look at the sub-case \mbox{$0 \leq \mu \leq 1/K$} first.
For $\delta=1$, we have $\mathsf{GNDT}^{\mathrm{ub}}_{\mathrm{D}}(\mu,1) \leq K$,
and by setting $s= \left \lfloor {K/4} \right \rfloor$, we obtain
${\mathsf{GNDT}^{\mathrm{lb}}_{s}(\mu,1)}\geq \frac{1}{12}K$ from \cite{Maddah-Ali2015a}.
On the other hand, for  a generic $\delta \in [0,1]$, we have
\begin{equation} \frac{\mathsf{GNDT}^{\mathrm{ub}}_{\mathrm{D}}(\mu,\delta)}{\mathsf{GNDT}^{\mathrm{lb}}_{s}(\mu,\delta)} =\underbrace{\frac{1+(s-1)(1-\delta)}{K(1-\delta)+(1+u)\delta}}_{f(\delta;K,u/K,s)} \cdot \frac{K \left(1- \mu\right)}{s\left(1-\frac{M}{\left \lfloor{\frac{N}{s}} \right  \rfloor}\right)}.
\end{equation}
By applying Lemma \ref{lemma_bound_u} to lower bound the value of $u$, we can write
$\frac{K}{s(1+u)} \geq \frac{K}{\frac{K}{4} \cdot (1+ K \mu)} > 1$.
Hence, from Lemma \ref{lemma_f}, the function $f(\delta;K,u/K,s)$ is non-decreasing in $\delta \in [0,1]$.
It follows that
\begin{equation}
\label{eq:bounding_decen_inequality}
\frac{\mathsf{GNDT}^{\mathrm{ub}}_{\mathrm{D}}(\mu,\delta)}{\mathsf{GNDT}^{\mathrm{lb}}_{s}(\mu,\delta)} \leq \frac{\mathsf{GNDT}^{\mathrm{ub}}_{\mathrm{D}}(\mu,1)}{\mathsf{GNDT}^{\mathrm{lb}}_{s}(\mu,1)} \leq 12.
\end{equation}

Next, we consider the sub-case $\frac{1}{K} < \mu  \leq \frac{1}{12}$.
From \cite{Maddah-Ali2015a}, we have
$ \mathsf{GNDT}^{\mathrm{ub}}_{\mathrm{D}} (\mu,1) \leq \frac{1}{\mu} $,
and by setting $s= \left \lfloor {\frac{1}{4\mu}} \right \rfloor$, we have
$ {\mathsf{GNDT}^{\mathrm{lb}}_{s}(\mu,1)} \geq \frac{1}{12\mu}$.
For a generic $\delta \in [0,1]$, we have
\begin{equation}
\frac{\mathsf{GNDT}^{\mathrm{ub}}_{\mathrm{D}}(\mu,\delta)}{\mathsf{GNDT}^{\mathrm{lb}}_{s}(\mu,\delta)} =\underbrace{\frac{1+(s-1)(1-\delta)}{K(1-\delta)+(1+u)\delta}}_{f(\delta;K,u/K,s)} \cdot \frac{K \left(1- \mu\right)}{s\left(1-\frac{M}{\left \lfloor{\frac{N}{s}} \right  \rfloor}\right)}.
\end{equation}
By applying Lemma \ref{lemma_bound_u}, it follows that
$\frac{K}{s(1+u)} \geq 4 \cdot \frac{K\mu}{1+K\mu} > 1$.
Hence, from Lemma \ref{lemma_f}, $f(\delta;K,u/K,s)$ is non-decreasing in $\delta \in [0,1]$.
Therefore, the statement in \eqref{eq:bounding_decen_inequality} holds here as well.

Finally, we consider the remaining sub-case $1/12 < \mu \leq 1$ for a generic $\delta \in [0,1]$.
By setting $s=1$, we get $ \mathsf{GNDT}_1^{\mathrm{lb}}(\mu,\delta) = 1-\mu$.
Moreover, from \cite{Maddah-Ali2015a}, we know that $\mathsf{GNDT}^{\mathrm{ub}}_{\mathrm{D}}(\mu, \delta) \leq \mathsf{GNDT}^{\mathrm{ub}}_{\mathrm{D}}(\mu,1) \leq \frac{1}{\mu}-1$.
Hence, ${\mathsf{GNDT}^{\mathrm{ub}}_{\mathrm{D}}(\mu,\delta)}/{\mathsf{GNDT}_1^{\mathrm{lb}}(\mu,\delta)} \leq 12$.
This concludes the proof.
\section{Proof of Lemma \ref{lemma_ineq_decen}}
\label{sec:proof_lemma_ineq_decen}
It readily seen from the definition of $u$ in \eqref{eq:eq_intr_u}
that \mbox{$\mathsf{GNDT}_{\mathrm{D}}(\mu,1) =\mathsf{GNDT}^{\mathrm{ub}}_{\mathrm{D}}(\mu,1)$}.
It is also easy to verify that $\mathsf{GNDT}_{\mathrm{D}}(\mu,0) =\mathsf{GNDT}^{\mathrm{ub}}_{\mathrm{D}}(\mu,0)$
and $\mathsf{GNDT}^{\mathrm{ub}}_{\mathrm{D}}(1,\delta) = \mathsf{GNDT}_{\mathrm{D}}(1,\delta)=0$.
Therefore, we focus on $\delta \in (0,1)$ and $\mu \in [0,1)$.
We define $b_m$,  $m \in\{0,1,\ldots,K-1\}$, such that
\begin{equation}
b_m=\frac{K \binom{K-1}{m} \mu^ m (1-\mu)^{K-m}}{K ( 1-\mu)}.
\end{equation}
It can be shown that $\sum_{m=0}^{K-1} {b_m}=1$ as follows
\begin{align}
\sum_{m=0}^{K-1} {b_m} & =  \frac{1}{K ( 1-\mu)} \sum_{m=0}^{K-1} {K \binom{K-1}{m} \mu^m (1-\mu)^{K-m}}   \\
& \label{eq:proof_sum_bm} = \sum_{m=0}^{K-1} {\binom{K-1}{m} \mu^m (1-\mu)^{K-1-m}} = 1
\end{align}
where \eqref{eq:proof_sum_bm} follows from the binomial identity\footnote{Recall that the binomial identity is given by $(a+b)^n = \sum_{r=0}^n {\binom{n}{r}a^rb^{n-r}}$.}.
Hence, the inequality in  \eqref{eq:decent_inequality_1} is equivalently written as
\begin{align}
\label{eq:decent_int_ineq}
\sum_{m=0}^{K-1}\frac{b_m}{K(1-\delta)+(1+m)\delta} & \leq \frac{1}{K (1-\delta) + (1+u) \delta} \\
\label{eq:decent_int_ineq_2}
\Rightarrow \sum_{m=0}^{K-1}\frac{b_m}{c_m + v} & \leq \frac{1}{ \tilde{c} + v }.
\end{align}
where $v \triangleq K(1-\delta)$, $c_m\triangleq (1+m)\delta$ and $\tilde{c}\triangleq (1+u)\delta$.
By rearrangement of \eqref{eq:decent_int_ineq_2}, we obtain
\begin{equation} \label{eq:decent_inter_3}
(\tilde{c}'+1)\sum_{m=0}^{K-1}{\frac{b_m}{c_m'+1}} \leq 1.
\end{equation}
where  $\tilde{c}'=\tilde{c}/v$ and $c_m'=c_m/v$
By the definition of $u$ in \eqref{eq:eq_intr_u}, for any $\delta \in (0,1)$, we have
\begin{align} \label{eq:eq_1}
\sum_{m=0}^{K-1}\frac{b_m}{(1+m)\delta} & = \frac{1}{ (1+u)\delta} \\
\label{eq:eq_2}
\Rightarrow  \sum_{m=0}^{K-1}\frac{b_m}{c_m} & = \frac{1}{\tilde{c}} \\
\label{eq:eq_3}
\Rightarrow  \sum_{m=0}^{K-1}{\frac{b_m}{c_m'}} & =\frac{1}{\tilde{c}'}.
\end{align}
By plugging $\tilde{c}'$ from \eqref{eq:eq_3} into \eqref{eq:decent_inter_3}, we obtain
\begin{equation} \label{eq:decent_inter_4}
\left(\frac{1}{\sum_{m=0}^{K-1}{\frac{b_m}{c_m'}}}+1 \right)\sum_{m=0}^{K-1}{\frac{b_m}{c_m'+1}} \leq 1.
\end{equation}
Hence, showing that \eqref{eq:decent_inter_4} holds implies that \eqref{eq:decent_inequality_1} holds for
$\delta \in (0,1)$.
This is shown next.

Let us define the function $f(v) = \frac{v}{1+v}$, which is concave in $\mathbb{R}_{+} \setminus \{0\}$.
Moreover, consider the points $\big\{ \frac{1}{c_0'},\ldots,\frac{1}{c_{K-1}'} \big\} $ in $ \mathbb{R}_{+} \setminus \{0\}$.
From $\sum_{m=0}^{K-1}{b_m} = 1$, which is obtained from (\ref{eq:proof_sum_bm}), and by applying Jensen's inequality, we have
\begin{align}
& \sum_{m=0}^{K-1}{b_m f \left(\frac{1}{c_m'} \right)}  \leq f \left( \sum_{m=0}^{K-1} \frac{b_m}{c_m'}   \right) \\
\Rightarrow&
\sum_{m=0}^{K-1}{b_m \frac{1}{c_m' + 1}}  \leq \frac{\sum_{i=1}^n{\frac{b_m}{c_m'}}}{\sum_{m=0}^{K-1}{\frac{b_m}{c_m'}}+1} \\
\Rightarrow &
\left( \frac{\sum_{m=0}^{K-1}{\frac{b_m}{c_m'}}+1}{\sum_{m=0}^{K-1}{\frac{b_m}{c_m'}}} \right)  \left( \sum_{m=0}^{K-1}{\frac{b_m}{c_m' + 1}} \right)  \leq 1 \\
\Rightarrow &
\left(\frac{1}{\sum_{m=0}^{K-1}{\frac{b_m}{c_m'}}}+1 \right) \sum_{m=0}^{K-1}{\frac{b_m}{c_m'+1}} \leq 1
\end{align}
which is the inequality in (\ref{eq:decent_inter_4}). This concludes the proof.
\section{Proof of Corollary \ref{corollary_cen_decen}}
\label{appendix:proof_corollary}
First, for $\mu = 0$ we have that $\mathsf{GDoF}_{\mathrm{C}}(0,\delta) = \mathsf{GDoF}_{\mathrm{D}}(0,\delta) = K(1-\delta) + \delta$, while
for $\mu = 1$ we have that $\mathsf{GNDT}_{\mathrm{C}}(1,\delta) = \mathsf{GNDT}_{\mathrm{D}}(1,\delta) = 0$.
Therefore, we  focus on $\mu \in (0,1)$ in what follows.
The multiplicative factor of $1.5$ in \eqref{eq:cen_decen_corollary} can be shown by considering the three following cases:
\begin{enumerate}
\item $K\geq3$:
	From Theorem \ref{th_cen},  it follows that $\mathsf{GDoF}_{\mathrm{C}}(\mu,\delta)$
	is bounded above by
	\begin{align} \label{eq:GDoF_c_upperbound}
		\mathsf{GDoF}_{\mathrm{C}}(\mu,\delta) & \leq (1-\delta) \frac{K}{1-\mu}+\delta \frac{1+K\mu}{1-\mu}
	\end{align}
	where \eqref{eq:GDoF_c_upperbound} holds with equality
    for  $\mu \in \{0,\frac{1}{K},\frac{2}{K}, \dots, \frac{K-1}{K}\}$, as expressed in \eqref{eq:GDoF_cen_2}.
    For the remaining points in $\mu \in [0,1]$, the achievable GDoF upper bound in \eqref{eq:GDoF_c_upperbound} follows from
    \begin{equation}
	\mathsf{GNDT}_{\mathrm{C}}(\mu,\delta) \geq \frac{K(1-\mu)}{K(1-\delta) + (1+K\mu)\delta}
	\end{equation}
    which in turn holds as $\frac{K(1-\mu)}{K(1-\delta) + (1+K\mu)\delta}$ is convex in $\mu$
    and $\mathsf{GNDT}_{\mathrm{C}}(\mu,\delta)$ is the lower convex envelope (see \eqref{eq:T_cen}).
    From Lemma \ref{lemma_ineq_decen}, a lower bound for $\mathsf{GDoF}_{\mathrm{D}}(\mu,\delta)$ is given by
	\begin{align} \label{eq:upper_coroll_dec}
		\mathsf{GDoF}_{\mathrm{D}}(\mu,\delta) & \geq (1- \delta) \frac{K}{1-\mu}+ \delta \frac{1+u}{1-\mu}
	\end{align}
	where $1 + u =\frac{K \left(1-\mu \right)}{\mathsf{GNDT}_{\mathrm{D}}(\mu,1)}$ from \eqref{eq:eq_intr_u}.
	From \cite{Maddah-Ali2015a}, we know that $\mathsf{GNDT}_{\mathrm{D}}(\mu,1)$ can be written as
	\begin{equation}
	\mathsf{GNDT}_{\mathrm{D}}(\mu,1) = \frac{1-\mu}{\mu} \left( 1- (1-\mu)^K \right).
	\end{equation}
	It follows that $1+ u $ is given by
	\begin{equation} \label{eq:def_1_plus_mu}
	1+ u  = \frac{K\mu}{1- (1-\mu)^K}.
	\end{equation}
	From \eqref{eq:GDoF_c_upperbound} and \eqref{eq:upper_coroll_dec}, the ratio between $\mathsf{GDoF}_{\mathrm{C}}(\mu,\delta)$ and $\mathsf{GDoF}_{\mathrm{D}}(\mu,\delta)$ is bounded above as
	\begin{equation} \label{eq:cen_decen_ratio}
	\frac{\mathsf{GDoF}_{\mathrm{C}}(\mu,\delta)}{\mathsf{GDoF}_{\mathrm{D}}(\mu,\delta)} \leq \frac{(1-\delta) K+\delta (1+K\mu)}{(1- \delta) K+ \delta (1+u)} \leq \frac{1+K\mu}{1+u}.
	\end{equation}
where the rightmost inequality in \eqref{eq:cen_decen_ratio} follows from $u \leq K \mu$, which in turn is obtained from
Lemma \ref{lemma_bound_u} in Appendix \ref{sec:proof_order_opt_T}.
	By plugging \eqref{eq:def_1_plus_mu} into \eqref{eq:cen_decen_ratio}, we obtain
	\begin{equation} \label{eq:final_bounding_K_geq_3}
	\frac{1+K\mu}{1+u} = \frac{1+K\mu}{K\mu} \left(  1 - (1-\mu)^K \right) \leq 1.5
	\end{equation}
   where the bound by $1.5$ follows directly from  \cite[Lem. 1]{Yan2016}.
	\item $K=2$: For this case, we consider the two following subcases:
	\begin{itemize}
		\item $\mu \in (0,1/2]$: For this interval, we employ the same bounding techniques used for the case $K \geq 3$.
		Hence, from (\ref{eq:cen_decen_ratio}) and (\ref{eq:final_bounding_K_geq_3}) we obtain
		\begin{equation} \label{eq:case_K_2}
		\frac{\mathsf{GDoF}_{\mathrm{C}}(\mu,\delta)}{\mathsf{GDoF}_{\mathrm{D}}(\mu,\delta)} \leq \frac{1+2\mu}{2\mu}
\left(  1 - (1-\mu)^2 \right).
		\end{equation}
		It is readily seen that the right-hand-side of (\ref{eq:case_K_2}), which we denote as $g(\mu)$, is a concave parabola
		with a maximum at $\mu = 3/4$.
		Given the symmetry of the parabola, it follows that that $g(\mu) \leq g(1/2) = 1.5$ for  $\mu \in (0,1/2]$.
		\item $\mu \in [1/2,1)$: For this interval, the bounding techniques used for the case $K \geq 3$ are loose.
		Alternatively, it can be easily shown from  Theorem \ref{th_cen} that
        $\mathsf{GDoF}_{\mathrm{C}}(\mu,\delta) = \frac{2}{1-\mu}$.
		Combining this with the upper bound for $\mathsf{GDoF}_{\mathrm{D}}(\mu,\delta)$ in (\ref{eq:upper_coroll_dec}), we obtain
		\begin{equation} \label{eq:case_K_2_2}
		\frac{\mathsf{GDoF}_{\mathrm{C}}(\mu,\delta)}{\mathsf{GDoF}_{\mathrm{D}}(\mu,\delta)} \leq \frac{2}{2(1- \delta) +  (1+u)\delta}
\leq \frac{2}{1+u}
		\end{equation}
		where the rightmost inequality in \eqref{eq:case_K_2_2}  follows from the fact that $1 + u \leq 2$, which can be easily shown.
        By plugging \eqref{eq:def_1_plus_mu} into \eqref{eq:case_K_2_2}, we obtain
		\begin{equation}
		\frac{2}{1+u} = \frac{1}{\mu} \left(  1 - (1-\mu)^2 \right) = 2 - \mu.
		\end{equation}
		It is readily seen that $2-\mu \leq 1.5$ for $\mu \in [1/2,1)$.
	\end{itemize}
\item Case $K=1$: In this case we have $\mathsf{GNDT}_{\mathrm{C}}(\mu,\delta) = \mathsf{GNDT}_{\mathrm{D}}(\mu,\delta) = 1 -\mu$, hence (\ref{eq:cen_decen_corollary}) holds.
\end{enumerate}
From the above three cases, the proof is complete.
It is worthwhile highlighting that for the case $K=2$, $\delta =1$ and $\mu = 1/2$,
we have ${\mathsf{GDoF}_{\mathrm{C}}(\mu,\delta)}/{\mathsf{GDoF}_{\mathrm{D}}(\mu,\delta)} = 1.5$.
Therefore, $1.5$ is in fact the tightest possible upper bound for  ${\mathsf{GDoF}_{\mathrm{C}}(\mu,\delta)}/{\mathsf{GDoF}_{\mathrm{D}}(\mu,\delta)}$.
\section*{Acknowledgment}
The authors would like to thank the anonymous reviewers for their valuable comments.
The authors are also grateful to Reviewer 1 for suggesting a shorter and more direct proof for Lemma \ref{lemma:AIS}.
\color{black}
\bibliographystyle{IEEEtran}
\bibliography{References}

\end{document}